\documentclass[letterpaper,11pt]{article}
\usepackage[utf8]{inputenc}
\usepackage{latexsym}
\usepackage{amssymb}
\usepackage[margin=1in]{geometry}
\usepackage{amsmath,amsfonts,amsthm}
\usepackage{dsfont}
\usepackage{graphicx}
\usepackage{multirow}
\usepackage{ifsym}
\usepackage{color}
\usepackage{hyperref}
\usepackage{dsfont}
\usepackage{xcolor}
\usepackage{algorithm}
\usepackage{algorithmicx}
\usepackage{algpseudocode}
\usepackage[numbers]{natbib}
\hypersetup{%
   breaklinks,%
   colorlinks=true,%
   linkcolor=black,%
   urlcolor=black,%
   citecolor=[rgb]{0,0,0.45}
}

\theoremstyle{plain}

\theoremstyle{plain}
\newtheorem{theorem}{Theorem}[section]
\newtheorem{lemma}[theorem]{Lemma}

\newtheorem{definition}[theorem]{Definition}
\theoremstyle{definition}
\newtheorem{observation}[theorem]{Observation}
\newtheorem{remark}[theorem]{Remark}
\newtheorem{corollary}[theorem]{Corollary}

\newtheorem*{restatementinner}{\restatementname}

\newenvironment{restatement}[2]
  {\def\restatementname{#1~\ref{#2}}%
   \begin{restatementinner}}
  {\end{restatementinner}}

\DeclareSymbolFont{bbold}{U}{bbold}{m}{n}
\DeclareSymbolFontAlphabet{\mathbbold}{bbold}

\newcommand{\nats}{{\mathbb N}}

\newcommand{\eps}{{\varepsilon}}

\newcommand{\opt}{\mathrm{OPT}}

\newcommand*{\doub}[1]{\overline{\overline{#1}}}

\title{An $n^{0.3+\eps}$-Approximation for Steiner $k$-Forest}
\author{Eden Chlamt\'a\v{c}\thanks{Institute of Theoretical Computer Science, Ben-Gurion University.}}
\date{}

\begin{document}

\maketitle
\begin{abstract}
    We give an $n^{0.3+\varepsilon}$-approximation algorithm for the Steiner $k$-Forest problem, for any constant $\varepsilon>0$. As a function of $n$, this improves over the $O(\min\{\sqrt{n},\sqrt{k}\})$-approximation of Gupta et al.\ [ESA'07, TALG'10] which has stood for nearly two decades for the general case, as well as the later $n^{0.448}$-approximation of Dinitz et.\ al [APPROX-RANDOM'14, TALG'17] for the uniform weight case.

    On the other hand, we show that, due to lower bounds on the Densest $k$-Subgraph problem, the $O(\sqrt k)$-approximation for Steiner $k$-Forest likely cannot be improved. Specifically, we show that for any sufficiently small $\eps>0$, 
    an $O(k^{1/2-\eps})$-approximation for Steiner $k$-Forest would surpass known degree-$n^{\Omega(\eps^2)}$ Sum-of-Squares integrality gaps for Densest k-Subgraph and refute the corresponding dense-versus-random conjecture.
\end{abstract}

\section{Introduction}

In the Steiner $k$-Forest problem, we are given an undirected graph $G=(V,E)$ with non-negative edge weights $w$, sets $S,T\subseteq V$ of vertices which we call terminals, a set of terminal pairs $P\subseteq S\times T$, and a parameter $k\in[|P|]$. The goal is to find a minimum weight subgraph in which at least $k$ pairs $(s,t)\in P$ are connected.

Unlike its non-parametrized counterpart, Steiner Forest, which has a 1.994-approximation~\cite{GT26}, and its parametrized single-source demand version $k$-MST, which has a 2-approximation~\cite{garg05}, Steiner $k$-Forest seems genuinely harder. Even the simple case of a unit-weight star graph is already equivalent to Smallest $k$-Edge Subgraph (as shown in~\cite{HJ06}), where the goal is to find a subgraph with at least $k$ edges and a minimum number of vertices. For this problem, as for its better known maximization version, Densest $k$-Subgraph, there is compelling evidence for an $n^{\Omega(1)}$ lower bound~\cite{chlamtavc2018sherali,DkSSoS}.

For $n$-vertex graphs and terminal connectivity parameter $k$, the best known approximation for Steiner $k$-Forest prior to our work is an $O(\min\{\sqrt{n},\sqrt{k}\})$-approximation due to Gupta et al.~\cite{GHNR10}. For the special case of unit weight edges, Dinitz et al.~\cite{DKN17} were able to obtain an $n^{0.448}$-approximation, and neither the unit weight or the general weight case has been improved since then. On the other hand, the special case of Smallest $k$-Edge Subgraph admits a much better approximation: $n^{3-2\sqrt{2}+\eps}\approx n^{0.17+\eps}$ for any $\eps>0$~\cite{CDK12}, which raises the possibility that Steiner $k$-Forest itself may admit a much better approximation. Indeed, we show the following (in the general case, for arbitrary graphs and arbitrary weights):

\begin{theorem}
    For any constant $\eps>0$, there is a randomized polynomial time $n^{0.3+\eps}$-approximation algorithm for Steiner $k$-Forest.
    \label{thm:main}
\end{theorem}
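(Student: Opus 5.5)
We will use the standard density-based (partial covering) framework. It suffices to give an algorithm that, given a guess $\opt$ of the optimum weight, finds a subgraph of weight $W$ connecting $k'\ge 1$ new pairs with density $W/k' \le n^{0.3+\eps}\cdot \opt/k$. Repeating this greedily, contracting the chosen components and decreasing $k$, loses only an $O(\log k)$ factor, which is absorbed into $n^{\eps}$.

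We first preprocess. Guess $\opt$ up to a factor of $2$. Delete all edges of weight more than $\opt$, and round tiny weights up to $\opt/n^2$. Then take a metric completion, so the graph is complete and all distances lie in $[\opt/n^2,\opt]$. By a standard bucketing argument, we may restrict to a single distance scale at the cost of an $O(\log n)$ factor.

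The density step then splits into two regimes, depending on how the optimal forest $F^*$ (weight $\le\opt$, connecting $k$ pairs) looks.

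\emph{Many small components.} Suppose a large part of the $k$ pairs lie in components that each contain few terminals relative to their weight. Then we can cluster terminals. Pick a radius $r$ and consider balls of radius $r$ around terminals. Pairs $(s,t)$ with $d(s,t)\le r$ can each be connected directly at cost $r$. If there are $k$ pairs in $P$ with $d(s,t)\le \opt\cdot n^{0.3}/k$, we are done trivially.

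\emph{Few large components.} Otherwise the optimum must share paths: a component of weight $w$ connects many pairs whose endpoints are mutually far. Here we build an auxiliary "demand graph" $H$ on the terminals. Each vertex is a terminal, and each edge is a pair in $P$. Vertices are grouped into clusters, namely balls of radius $r$, which we shrink to single vertices. A component of $F^*$ of weight $w$ touches at most $O(w/r)$ clusters, yet carries many pairs. It therefore corresponds to a small vertex set in the quotient demand multigraph spanning many edges. This is exactly a Smallest $k$-Edge Subgraph / densest-subgraph instance. To find such a set we invoke the $n^{3-2\sqrt2+\eps}$ approximation of~\cite{CDK12}. After that, we buy a Steiner tree (2-approximation) on the chosen cluster centers plus the radius-$r$ connections inside each cluster.

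We then balance the two regimes. For each scale $r$, let $c$ be the number of clusters the optimum touches ($c\approx\opt/r$) and $m$ the number of pairs captured. The first regime yields density about $r\,k/\ldots$, and the second yields density $\opt\cdot \alpha_{\mathrm{SkES}}(n)\cdot(\text{loss from clusters})/m$. Optimizing the threshold between them should give exponent $0.3$. The main obstacle is controlling the loss when clusters are contracted. A single ball may contain terminals from many optimal components, so the quotient multigraph can have spurious dense parts, and the optimal solution's clusters may overlap. I would first handle this with a random-shift low-diameter decomposition at scale $r$ (the source of randomization). It cuts each optimal component into $O(w/r)$ pieces in expectation, and I would then argue via averaging over $\log n$ scales that some scale loses only $n^{o(1)}$. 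If the exponent bookkeeping does not reach $0.3$, the fallback is to replace the black-box SkES call with a direct LP/Sherali–Adams-style rounding on the cluster graph tailored to the combined cost.
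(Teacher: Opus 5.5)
Your outer framework, greedy density steps with contraction, is fine and matches the paper's reduction to MR$k$T. The proof breaks at the core step: the reduction of the ``few large components'' regime to Smallest $k$-Edge Subgraph on the cluster-quotient demand graph. The equivalence of~\cite{HJ06} holds only because in a unit-weight star every chosen vertex costs exactly one unit to connect. In your quotient, the clusters are arbitrary radius-$r$ balls in a general metric. The Steiner tree on the cluster centers that the SkES algorithm returns can therefore cost far more than $r$ per cluster. Your bound ``a component of weight $w$ touches $O(w/r)$ clusters'' controls only the clusters the optimum uses, not the ones the algorithm picks. Since the SkES guarantee counts vertices, it says nothing about the density of what you actually buy. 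The ``spurious dense parts'' you flag are a symptom of this, and a random-shift low-diameter decomposition does not cure it, because inter-cluster connection cost remains unaccounted for. Relatedly, ``restricting to a single distance scale by bucketing'' is not valid: edges at other scales cannot be discarded, since they are needed for connectivity. This non-uniform, multi-scale connection cost is exactly what makes Steiner $k$-Forest harder than SkES, which already has an $n^{3-2\sqrt{2}+\eps}$-approximation~\cite{CDK12}. Finally, the split between your two regimes is not a real dichotomy: failing to find $k$ short pairs does not yield the structure you use in the second regime.

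The paper controls cost by making it uniform and countable. It applies FRT to get a tree, then the height reduction of Lemma~\ref{lem:CEK} to reach depth $O(1/\eps)$, which is where the $n^{\eps}$ loss comes from. It then applies the layer-wise regularization of Lemma~\ref{lem:reg}, with uniform edge weights per layer, so the cost of each layer is proportional to the number of edges retained there. Only then can the star-case greedy idea be generalized, via the layered subsampling of Algorithm~\ref{alg:subsample} with per-layer probabilities (Lemma~\ref{lem:sqrt}). That subsampling re-regularizes the $T'$ side after sampling the $S'$ side in order to get concentration. Separately, your exponent $0.3$ is asserted rather than derived; you hedge with a fallback yourself, and nothing in your two regimes produces that number. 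In the paper it comes from balancing four specific guarantees:
\begin{itemize}
\item $\tilde O(\min\{\hat n_S,\hat n_T\})$ from Corollary~\ref{cor:hat-n};
\item the generalized greedy bound $\tilde O(\sqrt{n_T\hat D_S/(\hat n_T D_S)})$;
\item the neighborhood-buying bound $\tilde O(\hat n_T D_T/(\hat D_S\hat D_T))$ of Lemma~\ref{lem:2-step};
\item the $O(\sqrt{k})$ algorithm of~\cite{GHNR10}.
\end{itemize}
A case split on whether $D_T/\hat D_T$ exceeds $n_S^{2/5}/n_T^{3/10}$ then gives $n_S^{1/5}n_T^{1/10}\le n^{3/10}$. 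Without a mechanism that controls connection cost and an actual tradeoff computation, your proposal does not establish the theorem.
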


It is natural to wonder, since the $O(\sqrt{n})$-approximation can be significantly improved, whether a similar improvement shouldn't also hold for the $O(\sqrt{k})$-approximation. However, we observe that, for an appropriate parameter regime, powerful SoS integrality gaps (and a corresponding hardness conjecture that predates them) for Densest $k$-Subgraph translate to $k^{1/2-\eps}$ integrality gaps 
for Smallest $k$-Edge Subgraph (for any constant $\eps>0$), making a significant improvement to the $O(\sqrt{k})$-approximation for Steiner $k$-Forest evidently unlikely.

Finally, we note that Gupta et al.~\cite{GHNR10} also gave a reduction that is nearly approximation preserving (up to an $O(\log ^2n)$-factor) from the Dial-a-Ride problem\footnote{Here we are given an $n$-point metric, a collection of items each with a source and destination, and need to plot a minimum length route to get every item from its source to its destination using a vehicle that can carry at most $k$ items at a time.} to Steiner $k$-Forest. As a consequence, our algorithm also translates to an $n^{0.3+\eps}$-approximation for Dial-a-Ride.

\subsection{Overview of our results}
To greatly simplify the problem, we use tree embeddings, a common tool for network connectivity problems (see, e.g.,~\cite{AA97,GKR9800, cccdggl9899,CEK06,CEGS0811}). As a further simplification, we also employ a known height reduction embedding in exchange for an $O(n^{\eps})$ loss in the approximation guarantee (the source of the $n^{\eps}$ factor in our result).\footnote{We note that with a more careful analysis, $\eps$ could be reduced from ``any constant" to some $o_n(1)$ but we make no attempt to optimize the $n^{\eps}$ factor here, as we believe this would distract from the main ideas in our algorithm.}

Beyond this preprocessing, our improvement is essentially an extension of the $O(\min\{\sqrt{n},\sqrt{k}\})$-approximation of~\cite{GHNR10}. At a high level, the $O(\sqrt{n})$ part of their guarantee can be thought of as a tradeoff between two algorithms: an $O(\hat n)$-approximation, and an $O(n/\hat n)$-approximation, where~$\hat n$ is essentially the number of distinct terminals involved in an optimum solution. We observe that the second of these has a much stronger analog in the star graph case (that is, Smallest $k$-Edge Subgraph): a greedy $O(\sqrt{n/\hat n})$-approximation. Extending this algorithm to our more general setting brings the tradeoff between that and the $O(\hat n)$-approximation down to $\tilde O(n^{1/3})$. Finally, introducing one more new algorithmic idea and combining with the $O(\sqrt{k})$-approximation of~\cite{GHNR10} further improves the tradeoff between all the various algorithms to $\tilde O(n^{3/10})$, which, after factoring in the loss from the embedding, gives us $n^{0.3+\eps}$.

While the greedy $O(\sqrt{n/\hat n})$-approximation in the star case is quite simple and intuitive, it is far from immediate that it can be extended to more elaborate tree structures. To achieve this, we first apply additional preprocessing to make the input tree locally approximately regular, and for such instances, design a general procedure that translates edge-sampling algorithms that achieve good terminal-pair connectivity in expectation to ones that achieve good connectivity with high probability. It is then a matter of very carefully choosing probability parameters at different levels in the tree to match the desired guarantee.

\paragraph{Organization of the paper.} In Section~\ref{sec:preliminaries}, we describe the various preprocessing steps, introduce our notation, and give a very simple one-sided tree algorithm. In Section~\ref{sec:subsampling}, we give our general procedure for edge-sampling algorithms with high success probability. In Section~\ref{sec:sqrt} we present our main algorithm (the analog of the $O(\sqrt{n/\hat n})$-approximation), which suffices to get an $n^{1/3+\eps}$ guarantee. In Section~\ref{sec:improve} we give our final improvement, obtaining $n^{0.3+\eps}$, and finally in Section~\ref{sec:hard}, we elaborate on the special case of Smallest $k$-Edge Subgraph and the implied conditional hardness of improving over $O(\sqrt{k})$.

\section{Preliminaries}\label{sec:preliminaries}
We begin by considering the parametrized ratio version of the problem:

\begin{definition}
    In the Minimum-Ratio $k$-Tree (MR$k$T) problem, given an input $\langle G,w,S,T,P,k\rangle$ as for Steiner $k$-Forest, find a tree $H$ connecting at least $\hat k$ terminal pairs in $P$ for some $\hat k\in[k]$, and minimizing the ratio $w(H)/\hat k$.
\end{definition}
There is a simple reduction from Steiner $k$-Forest to this problem: repeatedly run an approximation algorithm for MR$k$T (each time updating the number of pairs left to satisfy), and contract the tree found by the algorithm to a single vertex. A standard argument shows that this reduction translates any $\tilde O(n^{\alpha})$-approximation for MR$k$T to an $\tilde O(n^{\alpha})$-approximation for Steiner $k$-Forest. Thus, in the rest of the paper, even when not explicitly specified, the algorithm we design and analyze is an approximation algorithm for MR$k$T.

Next, using the FRT embedding into a distribution on trees~\cite{FRT04}, we may assume that the input graph is a tree, and lose at most an $O(\log n)$-factor in the approximation ratio (this can be done either before or after the reduction to MR$k$T). 

With both of these reductions in mind, we next guess a vertex $r\in V$ which participates in an optimum solution to MR$k$T (by trying every possibility), consider it as the root of our input tree (after FRT), and perform the height reduction embedding guaranteed by the following lemma, (Claim 2.4 in~\cite{CEK06}, slightly rephrased):

\begin{lemma}
    For some universal constant $C>1$, there exists a linear time algorithm that, given a rooted tree $G_1=(V_1,E_1)$ with weights $w_1$ and root $r_1$, and a parameter $\alpha\geq C$, returns a rooted tree $G_2=(V_2,E_2)$ with weights $w_2$, root $r_2$, and an embedding $\pi:V_1\to V_2$, with the following properties:
    \begin{enumerate}
        \item For any vertex set $U\subseteq V_1$, let $H_1$ be the subtree of $G_1$ spanning $\{r_1\}\cup U$ and $H_2$ be the subtree of $G_2$ spanning $\{r_2\}\cup \pi(U)$. Then $$w_2(H_2)\leq \alpha\cdot w_1(H_1).$$
        \item For any vertex set $U\subseteq V_2$, let $H_2$ be the subtree of $G_2$ spanning $\{r_2\}\cup U$ and $H_1$ be the subtree of $G_1$ spanning $\{r_1\}\cup\pi^{-1}(U)$. Then $$w_1(H_1)\leq w_2(H_2).$$
        \item The height of $G_2$ is at most $O(\log_{\alpha} n)$.
    \end{enumerate}
    \label{lem:CEK}
\end{lemma}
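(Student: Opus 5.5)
We would prove Lemma~\ref{lem:CEK} by a heavy-path / weighted-centroid style decomposition of $G_1$. We contract paths whose weight is dominated by a single heavy subtree into ``shortcut'' edges, so that each root-to-leaf path in $G_2$ visits only $O(\log_\alpha n)$ decomposition levels. Concretely, we first preprocess $G_1$ so that edge weights along any root-leaf path fall into geometric scales. We then build $G_2$ recursively.

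\textbf{Construction.} Given a subtree rooted at $v$, pick a path $Q$ from $v$ downward, always descending into the child whose subtree has the most vertices, but stop descending once the subtree size falls below $|T_v|/\alpha$. Every off-path subtree hanging from $Q$ then has at most $|T_v|/\alpha$ vertices. In $G_2$, attach each vertex $u$ of $Q$ directly to $\pi(v)$ by an edge of weight $w_1(v,u)$, the path distance. Recurse on the hanging subtrees, hanging each from $\pi(u)$. Since subtree sizes shrink by a factor $\alpha$ per level, the height is $O(\log_\alpha n)$, which gives property~3.

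Property~2 is immediate, because every $G_2$ edge $(\pi(v),\pi(u))$ corresponds to a $G_1$ path of equal weight. Hence the $G_1$ subtree spanning $\{r_1\}\cup\pi^{-1}(U)$ is covered by the union of these paths.

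\textbf{Main obstacle: property~1.} Naive star shortcuts over a path $Q$ can cost $|U\cap Q|$ times the path length. To avoid this, we replace the star by a hierarchy within $Q$: a balanced $\alpha$-ary tree over the positions of $Q$ built according to weight scales (distance doubling), as in CEK. This adds only $O(\log_\alpha n)$ height when combined with the size halving, and it charges each used segment of $Q$ at most $\alpha$ times. The charging argument is the heart of the proof: one must show that for any $U$, the $G_2$ spanning tree restricted to one path's gadget costs at most $O(\alpha)$ times the weight of the prefix of $Q$ reaching the deepest point of $U$, with the constant absorbed by requiring $\alpha\ge C$. I expect that balancing height against this $\alpha$-factor blowup, by choosing the branching per level as $\alpha$ in both the size and weight hierarchies, is where the care goes. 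Since this lemma is Claim~2.4 of~\cite{CEK06}, in the paper we would simply cite it.
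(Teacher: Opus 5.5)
The paper gives no proof of this lemma: it imports it (slightly rephrased) as Claim~2.4 of \cite{CEK06}, so your closing remark is exactly the paper's route. The self-contained sketch you offer, however, does not prove the statement.

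\textbf{The height bound fails.} Your key structural claim is false. Descending into the largest child only guarantees that each subtree hanging off $Q$ has at most half the size of its parent's subtree, not $|T_v|/\alpha$. For example, if $v$ has two children with $(|T_v|-1)/2$ descendants each, one of them hangs off $Q$. Sizes therefore shrink only by a factor $2$ per recursion level. On a complete binary tree, the leaf reached by always stepping off the current path passes through about $\log_2 n$ nested gadgets, each adding at least one level. That gives height $\Omega(\log n)$ rather than $O(\log_\alpha n)$. This is fatal precisely in the regime $\alpha=n^{\eps}$, where the paper needs height $O(1/\eps)$; your later phrase ``combined with the size halving'' silently concedes it.

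\textbf{The in-path gadget does not fix property~1.} A balanced $\alpha$-ary hierarchy of depth $\log_\alpha n$ on each path multiplies the height further. Charging every segment up to $\alpha$ times per hierarchy level gives the Zelikovsky-type loss $\alpha\log_\alpha n$, not $O(\alpha)$, unless costs decay geometrically across levels, which you assert but do not establish.

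\textbf{Missing idea (i): the right decomposition unit.} Use the heavy skeleton $K_v=\{u\in T_v : |T_u|\geq |T_v|/\alpha\}$ instead of a single path. It is a connected subtree containing $v$. Its leaves have disjoint subtrees of size at least $|T_v|/\alpha$, so it has at most $\alpha$ leaves. Every subtree hanging off it has fewer than $|T_v|/\alpha$ vertices. Split $K_v$ into at most $\alpha$ vertex-disjoint downward chains.

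\textbf{Missing idea (ii): a flat gadget.} No deep hierarchy is needed inside a chain. For a path, the subtree spanning its top and a vertex set is determined by the deepest chosen vertex. So for each chain $c$ and each scale $s$, add one Steiner hub joined to $\pi(v)$ with weight $2^{s+1}$. Attach every $u\in c$ whose $G_1$-distance satisfies $d(v,u)\in[2^s,2^{s+1})$ to that hub with weight $0$. Recurse on the hanging subtrees, joining each to $\pi(u)$ by its original edge.

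\textbf{Why this works.}
\begin{itemize}
\item Property~2: for the deepest used vertex $u_c$ on each chain, the hub of $u_c$'s scale already costs more than $d(v,u_c)$. The paths from $v$ to the $u_c$ cover $H_1$ inside $K_v$.
\item Property~1: the hubs used on chain $c$ form a geometric series totalling at most $4d(v,u_c)$. The whole gadget therefore costs at most $4\alpha\max_c d(v,u_c)\leq 4\alpha\,w_1(H_1\cap E(K_v))$. The skeletons and connecting edges partition $E(G_1)$, and connecting edges cost the same in both trees.
\item Property~3: each gadget adds $O(1)$ to the height, and sizes drop by a factor $\alpha$ per level.
\end{itemize}
Running the construction with $\alpha/4$ in place of $\alpha$, which is where the constant $C$ enters, yields the lemma.
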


For any small $\eps>0$, plugging in $\alpha=n^{\eps}$, we get a tree of height $O(1/\eps)$. Since we assume the vertex~$r$ is known to be part of our optimum MR$k$T solution, our algorithm will consist of choosing a set of  terminals and returning the subtree connecting these terminals to $r$, and thus by Lemma~\ref{lem:CEK} we may assume our input tree has depth at most $O(1/\eps)$, while paying an additional $n^{\eps}$-factor in the approximation.

Having established our final embedding, we also guess the parameters of an optimum solution to MR$k$T: an approximate bound $\opt$ on the total edge weight of the solution (up to a constant factor), and the number of pairs satisfied in the solution, which we denote by $\hat k$. Note that w.l.o.g., edge weights fall in a somewhat narrow range: we can delete every edge of weight greater than $\opt$, and contract every edge of weight at most $\opt/n$ (since the input graph is a tree and we are aiming for a super-constant approximation ratio anyway).

To simplify the algorithm, we perform a regularization procedure which guarantees layer-wise near-regularity for the input tree and the terminal pairs. 
\begin{lemma}
    Given a bounded-diameter $n$-vertex tree $G=(V,E)$ with positive edge weights $w$ in some range $[w_0,n\cdot w_0]$, a root $r\in V$, and a set of terminal pairs $P\subseteq S\times T$ where $S,T\subseteq V$, 
    we can efficiently find terminal subsets $S'\subseteq S$ and $T'\subseteq T$ and a set of pairs $P'\subseteq P\cap(S'\times T')$ of size $|P'|=\tilde\Omega(|P|)$, with the following properties:
    \begin{itemize}
    \item For some $c=O(1)$ that depends only on the diameter of $G$, there exist $D_S,D_T\in\nats$ such that every terminal $s\in S'$ has terminal-pair degree in the range $D_S/\log^c n\leq\deg_{P'}(s)\leq D_S$, and every terminal $t\in T'$ has degree in the range $D_T/\log^c n\leq\deg_{P'}(t)\leq D_T$.    
    \item There exist $g,h\in\nats$ such that all terminals in $S'$ are at depth $g$ and all terminals in $T'$ are at depth $h$ (note that these are constants bounded by the diameter of $G$).
    \item There exist $b_1,\ldots,b_g\in\nats$ such that, in the rooted subtree spanning $\{r\}\cup S'$, at each layer $i\in[g]$, every parent node has fan-out in the range $[b_i/\log^c n,b_i]$. Moreover, every edge in this layer has weight in the range $[\beta_i,2\beta_i]$ for some $\beta_i> 0$.
    \item There exist $d_1,\ldots,d_h\in\nats$ such that, in the rooted subtree spanning $\{r\}\cup T'$, at each layer $j\in[h]$, every parent node has fan-out in the range $[d_j/\log^c n,d_j]$. Moreover, every edge in this layer has weight in the range $[\delta_j,2\delta_j]$ for some $\delta_j> 0$.
    \item There exist $\tilde b_1,\ldots,\tilde b_g$ such that for every terminal $t\in T'$, in the rooted subtree spanning $\{r\}\cup \Gamma_{P'}(t)$, at each layer $i$ (of the $S'$-side of the tree), every parent node has fan-out in the range $[\tilde b_i/\log^c n,\tilde b_i]$.
    \item There exist $\tilde d_1,\ldots,\tilde d_h$ such that for every terminal $s\in S'$, in the rooted subtree spanning $\{r\}\cup \Gamma_{P'}(s)$, at each layer $j$ (of the $T'$-side of the tree), every parent node has fan-out in the range $[\tilde d_j/\log^c n,\tilde d_j]$.
\end{itemize}
    \label{lem:reg}
\end{lemma}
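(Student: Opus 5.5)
We will obtain all the properties by repeated bucketing, losing at most a polylogarithmic fraction of the pairs each time. Because the depth is $O(1)$, the total loss is $\log^{O(1)} n$ and $|P'|=\tilde\Omega(|P|)$.

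\textbf{Depths and edge weights.} First, group the terminals in $S$ by depth and the terminals in $T$ by depth. There are only $O(1)$ choices of the pair $(g,h)$, so by pigeonhole some choice retains a constant fraction of $P$. Next, give each pair $(s,t)$ a weight signature: for each layer of the root-to-$s$ path and of the root-to-$t$ path, the index $\lfloor\log_2 w(e)\rfloor$ of the edge in that layer. Since $w\in[w_0,nw_0]$, each index takes $O(\log n)$ values, and there are $O(1)$ layers. Hence there are $(\log n)^{O(1)}$ signatures, and the most popular one keeps a $1/\log^{O(1)}n$ fraction of the pairs. Restrict to its pairs and to the terminals they touch; this fixes the values $\beta_i$ and $\delta_j$.

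\textbf{Degrees and fan-outs by a bottom-up peeling loop.} The remaining properties are all degree conditions in a bipartite or layered structure. The standard tool is this: in any bipartite graph, bucket one side by degree into $O(\log n)$ dyadic classes and keep the class carrying the most edges. This costs a factor $O(\log n)$ in edges and makes that side's degrees uniform up to a factor $2$. Trimming one side, however, destroys the uniformity of the other side, so we iterate and track a potential.

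Concretely, we repeatedly run the following round of steps.
\begin{enumerate}
\item Make $\deg_{P'}$ uniform on $S'$, and likewise on $T'$, by dyadic bucketing.
\item For each layer $i$ of the $S'$-side tree, process bottom-up. Bucket the layer-$(i-1)$ nodes by fan-out and keep the heaviest bucket, where a node's weight is the number of pairs below it.
\item Do the same for the $T'$-side tree.
\item For the $\tilde b_i$ and $\tilde d_j$ conditions, bucket each triple (terminal, layer, parent in $\Gamma$-subtree) by fan-out. Then, for each terminal, keep only the pairs consistent with its heaviest per-layer profile. There are $(\log n)^{O(1)}$ profiles, and we select the profile that is globally most popular.
\end{enumerate}

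A uniformity condition may be broken by later deletions, but only by lowering degrees or fan-outs. To control this, we use the following rule: whenever some node's degree drops below a $1/(2\log n)$ fraction of its bucket's upper bound, delete that node. This is the standard ``prune low-degree vertices'' argument. Deleting all nodes below a $1/(2L)$ fraction of the target, where $L$ is the number of constraint types, removes at most half of the remaining pairs per pass. Equivalently, one can argue that each full round either keeps all conditions within the slack $\log^c n$, or shrinks some dyadic upper bound by a factor $2$. Since the bounds are at most $n$ and there are $O(1)$ of them, there are only $O(\log n)$ rounds of shrinkage overall, and each costs a constant or polylogarithmic fraction.

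\textbf{Main obstacle.} The hard step is making this termination and loss accounting rigorous with a single constant $c$. In particular, the interaction between the $\tilde b_i$ and $\tilde d_j$ conditions (which depend on the neighborhoods $\Gamma_{P'}$) and the tree fan-out conditions requires that the number of rounds be bounded by a potential that decreases multiplicatively. My first attempt will be the potential given by the sum of the logarithms of all current upper bounds. It is $O(\log n)$ in total, since there are $O(1)$ parameters each at most $n$. I will argue that every pruning pass either halves some bound or removes at most a $1/2$ fraction of the pairs only once per bound-halving. This gives a total loss of $2^{O(\log n)}$, which is too much. I will therefore refine the argument: prune at threshold $1/\log^{c'} n$, so that each pass removes only a $1/\log n$ fraction of the pairs. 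Over $O(\log n)$ passes this leaves a constant fraction, giving $c=O(1)$ determined by the depth.
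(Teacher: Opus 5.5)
Your handling of depths and edge weights (pigeonhole on $(g,h)$, then the most popular dyadic weight signature, losing $\log^{O(1)}n$) is exactly what the paper does. The gap is in the degree/fan-out part, and it is the obstacle you flagged yourself: as set up, your loss accounting does not give $|P'|=\tilde\Omega(|P|)$.

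\textbf{Repeated bucketing.} Every round re-runs the dyadic bucketing of steps 1--4. After pruning at threshold $1/(2\log n)$ (or $1/\log^{c'}n$), the surviving degrees can span $\Theta(\log\log n)$ dyadic classes, so each re-bucketing can cost a super-constant factor. Over the up to $\Theta(\log n)$ rounds you allow, your accounting only bounds the total loss by a superpolynomial factor. Lowering the pruning threshold does nothing about these bucketing losses, and the claim that a pass then removes only a $1/\log n$ fraction is not justified.

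\textbf{The pruning bound.} The claim that pruning ``removes at most half of the remaining pairs'' needs two things for each constraint type: killing a violating object removes fewer than threshold-many pairs, and (number of objects)$\times$(upper bound) is at most a polylogarithmic multiple of $|P|$. This holds for terminal pair-degrees, but fails for fan-outs. Deleting a parent with few children can delete arbitrarily many pairs, because nothing in your steps 2 and 4 controls how many pairs lie below each child; keeping the heaviest fan-out bucket does not regularize those counts. The same problem affects the $\Gamma_{P'}(t)$-restricted fan-outs.

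\textbf{The missing idea.} No rounds or potential function are needed once every condition is phrased as a \emph{count of pairs sharing a tuple of positions} on the path $s\to r\to t$. The relevant counts are: pairs at a terminal, pairs through a given edge of each layer, and pairs with a fixed terminal $t$ (resp.\ $s$) through a given edge on the other side. The count through a node's parent edge is the sum of the counts through its child edges. So regular counts at consecutive layers force regular fan-outs, and likewise for $\tilde b_i,\tilde d_j$ using the $t$- or $s$-restricted counts.

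The procedure is then a single bucket-then-prune pass:
\begin{enumerate}
\item Bucket once, type by type, keeping the heaviest dyadic class. This fixes for each type $I$ an upper bound $M_I$ with $(\#\text{tuples of type }I)\cdot M_I\le|P|\log^{O(1)}n$. Later deletions only lower counts, so every $M_I$ stays valid.
\item Delete, until no violation remains, all pairs through any tuple whose count is positive but below $M_I/\lambda$. Each tuple is killed at most once, while carrying fewer than $M_I/\lambda$ pairs. So for a suitable $\lambda=\log^{O(1)}n$ the total loss is at most $|P|/2$, however the deletions cascade.
\end{enumerate}

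This is exactly Lemma~\ref{lem:chlamtac-dinitz} (Lemma 2.6 of Chlamt\'a\v{c}--Dinitz). The paper first doubles the tree at $r$ so that the paths $(u,s)\circ p_{s,t}\circ(t,v)$ are simple, applies that lemma to them as a black box, and reads $D_S,D_T,b_i,d_j,\tilde b_i,\tilde d_j$ off the resulting vector $(m_I)$.
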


This regularizing procedure is largely based on a regularizing procedure of Chlamt\'a\v{c} and Dinitz~\cite{Chlamtac-Dinitz}, and given in Appendix~\ref{app:reg} for completeness. Note that the subtree spanning $\{r\}\cup S'$ and the subtree spanning $\{r\}\cup T'$ may overlap in the actual graph, but we will treat them as coming from two separate copies of the graph joined at $r$ (and thus pay at most an additional factor $2$). For later consistency, all the ranges for degree parameters in the above lemma should be read as giving tight upper bounds. That is, ``all degrees in this set are in the range $[x,y]$" here means ``all these degrees are in this range and at least one is exactly $y$."

Repeatedly finding a set $P'\subseteq P$ as guaranteed by Lemma~\ref{lem:reg} and deleting these pairs from~$P$, partitions the pairs $P$ into $\tilde O(1)$ many buckets, one of which must contain $\tilde\Omega(\hat k)$ pairs in our optimum solution. Thus, we will run our algorithm on each of these buckets separately. For now, let $(P',S',T')$ be a bucket that contains $\tilde\Omega(\hat k)$ such pairs.

Applying the same procedure (in theory) to our optimum solution would yield a similarly regularized solution connecting at least $\tilde\Omega({\hat k})$ pairs, with total edge weight still at most $\opt$. We will focus on such an optimum solution in our algorithm and analysis.

Notation: After applying Lemma~\ref{lem:reg} to both the input and the optimum solution, we use the parameters $D_S,D_T,b_i,d_j,\tilde b_i,\tilde d_j$ defined in the lemma to refer to the parameters of the regularized set of paths $P'$ in the input. 
Note that the number of terminals in $S'$ is $$n_S:=|S'|=\tilde\Theta(1)\cdot \prod_{i=1}^gb_i,$$ and the number of terminals in $T'$ is 
$$n_T:=|T'|=\tilde\Theta(1)\cdot\prod_{j=1}^h d_j.$$ 
For simplicity, we use a hat accent to indicate the corresponding parameters in our optimum solution, slightly adjusted so that degrees are aligned to the lower end of each range rather than the upper end. For example, the fan-out for each parent in layer two on the $S'$ side of the optimum tree is in the range $[\hat b_2,\hat b_2\cdot \log^c n]$ (with at least one having fan-out $\hat b_2$). This ensures monotonicity of solution versus input parameters (e.g., $\hat D_S\leq D_S$). Note that we thus have $$\hat k=\tilde\Theta(\hat n_S \hat D_S)=\tilde\Theta(\hat n_T \hat D_T).$$ For the purpose of our algorithm, we guess all the parameters of the regularized optimum solution.

We make one additional simplification: By our assumption regarding the parameters in the optimum solution, every layer of the optimum tree has cost at most $\opt$. We increase (or very slightly decrease) the current weights as necessary, so the cost becomes $\tilde\Theta(\opt)$. So, for example, as there are $\tilde\Theta(1)\cdot\prod_{i=1}^{g'}\hat{b}_i$ edges in layer $g'$ of the $S'$-side optimum tree, the current weight bound $\beta_{g'}$ must be at most $\tilde O(\opt/\prod_{i=1}^{g'}\hat{b}_i)$ by definition. Setting all edge weights in this layer to exactly $\opt/\prod_{i=1}^{g'}\hat{b}_i$ gives completely uniform weights within each layer, without affecting the approximation guarantee by more than a polylogarithmic factor (and we adjust weights on the $T'$-side similarly, as well).

\subsection{One-sided tree and connection to $k$-MST}

We will occasionally need to find a rooted subtree to connect just one side (say, the $S$ side) to $r$ after we've already fixed a set of terminals on the other side. This can easily be reduced to $k$-MST, but since the input graph is a tree, this kind of optimization can also be done exactly rather than relying on approximation algorithms for more general graphs. More precisely, we have the following (the roles of $S$ and $T$ can of course be exchanged):

\begin{lemma}
    There is a polynomial time exact algorithm for the following problem: Given a weighted tree graph with root $r$, terminal pairs $P\subseteq S\times T_0$, 
    and a parameter $k'$, find a set of terminals $S_0\subseteq S$ such that
    \begin{enumerate}
        \item The number of satisfied demand pairs is bounded by $|P\cap(S_0\times T_0)|\geq k'$, and
        \item The rooted subtree spanning $\{r\}\cup S_0$ has minimum cost.
    \end{enumerate}
    \label{lem:one-side}
\end{lemma}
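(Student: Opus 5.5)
The approach is a standard tree dynamic program. The set $T_0$ is fixed, so each terminal $s\in S$ has a fixed value $v(s):=|P\cap(\{s\}\times T_0)|$, the number of pairs it would satisfy. The objective then becomes: choose $S_0\subseteq S$ with $\sum_{s\in S_0}v(s)\geq k'$ so that the rooted subtree spanning $\{r\}\cup S_0$ has minimum cost. This is a rooted $k$-MST-type (prize-collecting, budgeted coverage) problem on a tree, and it can be solved exactly by a knapsack-style DP over the rooted tree. The values $v(s)$ are integers bounded by $|T_0|\leq n$, and we may cap them at $k'\leq |P|\leq n^2$. So the DP tables have polynomial size.

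For each vertex $u$ and each integer $j\in\{0,\ldots,k'\}$, define $f(u,j)$ as the minimum cost of a subtree rooted at $u$, contained in the subtree $G_u$ of $u$ and containing $u$ itself (or empty when $j=0$), whose spanned terminals $S_0\subseteq S\cap V(G_u)$ have total value at least $j$. The computation proceeds bottom-up:
\begin{itemize}
\item \textbf{Initialization at $u$.} Set $g_0(j)=0$ for $j\leq v(u)$, where $v(u)=0$ if $u\notin S$, and $g_0(j)=\infty$ otherwise.
\item \textbf{Merging children.} Process the children $c_1,\ldots,c_m$ one at a time. For child $c_\ell$ with edge weight $w(u,c_\ell)$, set
\[
g_\ell(j)=\min\Bigl(g_{\ell-1}(j),\ \min_{0\leq a\leq j}\bigl[g_{\ell-1}(j-a)+w(u,c_\ell)+f(c_\ell,a)\bigr]\Bigr).
\]
The first option skips the child; the second pays for the edge to the child and distributes the remaining demand $a$ into its subtree.
\item \textbf{Output.} Take $f(u,j)=g_m(j)$. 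The answer is $f(r,k')$, and the set $S_0$ itself is recovered by standard backtracking.
\end{itemize}

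Correctness follows by induction on the height. Any rooted subtree at $u$ decomposes uniquely into $u$ together with, for each included child $c_\ell$, the edge $(u,c_\ell)$ plus a rooted subtree at $c_\ell$. The values of the terminals covered add up across these disjoint parts, and the cost of the spanning subtree is exactly the sum of the included edges. Since we only need a total value of at least $j$, splitting the requirement as $(j-a)+a$ is without loss of generality; monotonicity of $f$ in $j$ handles any slack.

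Running time: each merge costs $O(k'^2)$, and there are $O(n)$ merges in total, so the algorithm runs in $O(nk'^2)=\mathrm{poly}(n)$ time. I do not expect a real obstacle. The one point to verify is that demand pairs are counted additively over terminals of $S_0$, which holds because $T_0$ is fixed and each pair $(s,t)$ is counted exactly once, via $s$.
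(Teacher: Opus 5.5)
Your proposal is correct and matches the paper's proof sketch: a bottom-up dynamic program that computes, for each node $u$ and each threshold, the minimum cost of a subtree rooted at $u$ satisfying at least that many pairs, merging children one at a time and crediting $u$'s own pairs when $u\in S$. You give more detail than the paper's sketch, namely the explicit recurrence, the polynomial table size via $k'\le n^2$, and the $O(nk'^2)$ running time.
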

\begin{proof}[Proof sketch:]
    This can be accomplished easily via dynamic programming. At every node $u$, for every~$k''$, we calculate the cost of the minimum subtree rooted at $u$ that satisfies $\geq k''$ demand pairs (if there is such a subtree, otherwise, $\infty$). A single leaf terminal $s\in S$ has cost $0$ for all $0\leq k''\leq |P\cap(\{s\}\times T_0)|$. For any non-leaf $u$, we order its children, and use dynamic programming to calculate these values for trees involving its first $j$ children, adding one child at a time, finally combining the information for its children with any demand pairs $u$ itself can satisfy in case $u$ is a terminal in $\Gamma_P(T_0)$.
\end{proof}

We note that, in our regularized setting, correctly guessing a terminal $t\in T'$ that participates in the regularized optimum solution, we can buy the path from $t$ to $r$ at cost at most $\opt$ (and likely significantly less), and then use Lemma~\ref{lem:one-side} with $T_0=\{t\}$ and $k'=\hat D_T$ to find a cost $O(\opt)$ tree connecting $\hat D_T=\tilde\Theta(\hat k/\hat n_T)$ terminal pairs. This is precisely the basis of the $O(\hat n)$-approximation algorithm of~\cite{GHNR10} mentioned earlier. Thus we have the following:

\begin{corollary}
    In the context of Lemma~\ref{lem:reg} (and the following discussion of its application to the optimum solution), there is an $\tilde O(\min\{\hat n_S,\hat n_T\})$-approximation algorithm for MR$k$T.
    \label{cor:hat-n}
\end{corollary}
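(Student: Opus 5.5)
By symmetry between the two sides (Lemma~\ref{lem:one-side} holds with the roles of $S$ and $T$ exchanged), it suffices to give an $\tilde O(\hat n_T)$-approximation. The algorithm is the one sketched just before the statement. For every terminal $t\in T'$, buy the root path from $t$ to $r$ on the $T'$-side. Then apply Lemma~\ref{lem:one-side} with $T_0=\{t\}$, pair set $P'$ and $k'=\hat D_T$ to obtain a minimum-cost rooted subtree on the $S'$-side connecting $\hat D_T$ partners of $t$. Among all choices of $t$ for which the combined cost is at most $O(\opt)$ (up to polylogarithmic factors), output the cheapest one. Since we guessed $\opt$ and the parameters of the regularized optimum, we may simply try all $t$ and keep the best ratio.

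The analysis fixes a terminal $t^*\in T'$ that lies in the regularized optimum solution $H^*$. I would establish three points.

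\begin{enumerate}
\item The path from $t^*$ to $r$ is contained in $H^*$, so its cost is at most $w(H^*)=\tilde O(\opt)$. Here I must use that we guessed $r$ as a vertex of the optimum, and that the two sides are treated as separate copies joined at $r$, which costs only a factor $2$.
\item In the regularized optimum, $t^*$ has pair-degree at least $\hat D_T$; by the convention on hatted parameters, degrees are aligned to the lower end of the range. So the $S'$-side of $H^*$ restricted to the root paths of $t^*$'s optimum partners is a feasible solution to the instance of Lemma~\ref{lem:one-side} with $k'=\hat D_T$, and it costs at most $\tilde O(\opt)$. Exactness of that lemma then gives a tree of cost at most $\tilde O(\opt)$.
\item The union of the two trees is connected through $r$, satisfies at least $\hat D_T$ pairs, and costs $\tilde O(\opt)$. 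Its ratio is therefore $\tilde O(\opt/\hat D_T)$. Since $\hat k=\tilde\Theta(\hat n_T\hat D_T)$, this equals $\tilde O(\hat n_T\cdot\opt/\hat k)$, which is $\tilde O(\hat n_T)$ times the optimum ratio for MR$k$T.
\end{enumerate}

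Finally, the ratio has to be carried back through the reductions. This is an MR$k$T approximation within the bucket, and the bucket's optimum retains $\tilde\Omega(\hat k)$ pairs, so only polylogarithmic factors are lost. The approximation guarantee is then stated in the usual MR$k$T sense, since the algorithm returns a tree with $\hat D_T\le \hat k\le k$ pairs.

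The step needing the most care is the feasibility comparison in point 2. The solution of Lemma~\ref{lem:one-side} must use the (possibly reweighted) uniform layer weights and the separate-copy convention consistently with how $\opt$ bounds $H^*$. Otherwise the $\tilde O(\opt)$ bound on the optimum's $S'$-side restricted to $\Gamma(t^*)$ might not transfer. I expect this to go through because reweighting changed costs by only polylogarithmic factors and every layer of the optimum costs $\tilde\Theta(\opt)$.
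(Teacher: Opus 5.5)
Your proposal is correct and takes the same approach as the paper. Guess (or try all) $t\in T'$ in the regularized optimum, buy its root path at cost at most $\opt$, and apply Lemma~\ref{lem:one-side} with $T_0=\{t\}$ and $k'=\hat D_T$ to obtain a cost-$\tilde O(\opt)$ tree connecting $\hat D_T=\tilde\Theta(\hat k/\hat n_T)$ pairs, with the symmetric argument giving the $\hat n_S$ bound; your extra care about the two-copy convention and the layer reweighting is consistent with the setup the paper already assumes.
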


\section{A general subsampling algorithm}\label{sec:subsampling}

Two of our algorithms rely on the following idea: on the $S'$ side, in each layer $i\in[g]$ retain every edge independently with probability $p_i$, and on the $T'$ side, in each layer $j\in[h]$ retain every edge independently with probability $q_j$. In expectation, the number of terminal pairs in $P'$ that remain connected is approximately $n_SD_S\prod_{i=1}^g p_i\prod_{j=1}^h q_j$, and we can similarly upper bound the expected cost of the tree. However, since sampling one side of the tree (say, the $S'$ side) may completely change the structure of the relevant $T'$ side tree, we may not get concentration around the expectation (or a high probability of success) if the $T'$ side is subsampled independently. Thus, we need a slightly more subtle subsampling procedure, which we describe in Algorithm~\ref{alg:subsample}.

\begin{algorithm}[h]
    \caption{Subsampling procedure}
    \label{alg:subsample}
\begin{algorithmic}
\State \textbf{Input:} A regularized tree as in Lemma~\ref{lem:reg}, probabilities $p_i\in[1/b_i,1]$ and $q_j\in[1/d_j,1]$ for all $i\in[g],j\in[h]$.
    \For{all $i=1$ to $g$} \Comment{Sample the $S'$ side}
       \For{every edge $e$ in layer $i$ on the $S'$ side of the tree}
       \State{\textbf{retain} edge $e$ with probability $\min\{1,p_i\log^{c+2} n$\}} independently. 
       \EndFor
    \EndFor
    \State{\textbf{let} $S''$ be the set of terminals in $S'$ still connected to $r$}
    \State{\textbf{call} the regularization procedure from Lemma~\ref{lem:reg} for the $T'$-side tree spanning $r\cup\bigcup_{s\in S''}\Gamma_{P'(s)}$ and pairs $P'\cap (S''\times T')$.}
    \State{\textbf{let} $T''$ be the set of remaining terminals in $T'$ after the regularization and $P''$ be the set of remaining connected terminal pairs.}
    \State{\textbf{let} the new fan-out parameters on the $T''$ side be $\bar d_1,\ldots,\bar d_h$} 
    \For{all $j=1$ to $h$} \Comment{Sample the $T''$ side}
       \For{every edge $e$ in layer $j$ on the $T''$ side of the tree}
       \State{\textbf{retain} edge $e$ with probability $\min\{1,(q_jd_j/\bar d_j)\log^{c+2} n$\}} independently. 
       \EndFor
    \EndFor
    \State{\textbf{return} the subtree spanning all terminals still reachable from $r$}
\end{algorithmic}
\end{algorithm}

We show that our subsampling procedure behaves just as if we sampled edges independently as in our na\"{i}ve approach (in terms of expected cost and terminal connectivity), but with a high probability of achieving roughly the expected number of connected terminal pairs.

\begin{lemma} Given a regularized instance satisfying the guarantees in Lemma~\ref{lem:reg}, and probability parameters $p_i,q_j$ satisfying $1/b_i\leq p_i\leq 1$ and $1/d_j\leq q_j\leq 1$ for every $i\in[g],j\in[h]$, Algorithm~\ref{alg:subsample} returns a subtree with the following properties:
\begin{enumerate}
    \item With high probability, the number of terminal pairs in $P'$ that remain connected is at least $$\tilde\Omega\left(n_SD_S\cdot \prod_{i=1}^g p_i\cdot \prod_{j=1}^{h}q_j\right).$$
    \item The expected cost of the subtree is $$\opt\cdot\tilde O\left(\max_{g'\in[g]}\prod_{i=1}^{g'}\frac{p_ib_i}{\hat b_i}+\max_{h'\in[h]}\prod_{j=1}^{h'}\frac{q_jd_j}{\hat d_j}\right).$$
\end{enumerate}
\label{lem:subsampling}
\end{lemma}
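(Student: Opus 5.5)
The plan is to analyze the two sampling phases separately, using the regularity of Lemma~\ref{lem:reg} at every step to get concentration layer by layer. The main tool is a Chernoff bound: in a layer where each parent has fan-out at least $b_i/\log^c n$, retaining children with probability $\min\{1,p_i\log^{c+2}n\}$ gives each surviving parent an expected number of surviving children of at least $p_ib_i\log^2 n\ge\log^2 n$. This uses $p_i\ge 1/b_i$. Hence with high probability every surviving parent keeps $\Theta(p_ib_i\log^{O(1)}n)$ children; in the capped case it keeps all of them.

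For property 1, I first analyze the $S'$ side. By induction over the layers $i=1,\dots,g$, with high probability the surviving tree is again near-regular, with fan-out $\tilde\Theta(p_ib_i)$ at layer $i$ (capped at $b_i$). A union bound over polynomially many parents then gives $|S''|=\tilde\Omega(n_S\prod_i p_i)$. Each $s\in S''$ keeps all $\deg_{P'}(s)\ge D_S/\log^c n$ pairs, since the $T'$ side is untouched at this point. So $|P'\cap(S''\times T')|=\tilde\Omega(n_SD_S\prod_ip_i)$.

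Next, applying Lemma~\ref{lem:reg} to the $T'$ side keeps an $\tilde\Omega(1)$ fraction of these pairs and produces a regularized tree with fan-outs $\bar d_j\le d_j$. I then sample each $T''$ layer with probability $\min\{1,(q_jd_j/\bar d_j)\log^{c+2}n\}$. Since $q_jd_j/\bar d_j\ge 1/\bar d_j$, the same Chernoff induction applies: with high probability each surviving parent retains a $\tilde\Omega(\min\{1,q_jd_j/\bar d_j\})\ge\tilde\Omega(q_j)$ fraction of its children. Consequently each $t\in T''$ survives, and the surviving terminal set is a near-regular $\tilde\Omega(\prod_j q_j)$ fraction of $T''$. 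Because the $S''$-side pair degrees of the terminals in $T''$ are near-uniform by the regularization, the number of surviving pairs is $\tilde\Omega(|P''|\prod_j q_j)$. A slight subtlety is that we need a lower bound on the pairs counted through surviving $T''$ terminals, not merely on the surviving terminals. The near-uniform degree $\deg_{P''}(t)$, which is within $\log^c n$ of its maximum, gives this directly.

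For property 2, I bound the cost layer by layer. The $S'$-side layer $g'$ has $\tilde O(\prod_{i\le g'}b_i)$ edges, each of weight $\opt/\prod_{i\le g'}\hat b_i$, and each edge survives with probability at most $\prod_{i\le g'}p_i\log^{c+2}n$, because all of its ancestors must survive. The expected cost is therefore $\opt\cdot\tilde O(\prod_{i\le g'}p_ib_i/\hat b_i)$. Summing over the $O(1)$ layers is within a constant of the maximum.

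The $T''$ side is the main obstacle, because the tree structure depends on the first phase. I would condition on the outcome of the $S'$ phase. The $T''$ tree then has at most $\tilde O(\prod_{j\le h'}\bar d_j)$ edges in layer $h'$, each surviving with probability at most $\prod_{j\le h'}(q_jd_j/\bar d_j)\log^{c+2}n$. The factors $\bar d_j$ cancel, leaving $\opt\cdot\tilde O(\prod_{j\le h'}q_jd_j/\hat d_j)$ regardless of the conditioning. The one remaining point to handle is the $\min\{1,\cdot\}$ cap, which only decreases cost. Taking expectation over the first phase completes the bound.
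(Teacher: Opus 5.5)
Your proposal is correct and follows essentially the same route as the paper: Chernoff concentration layer by layer on the $S'$ side, re-regularization, the same concentration on the $T''$ side using $\min\{1,q_jd_j/\bar d_j\}\ge q_j$, and linearity of expectation for the cost, where the $\bar d_j$ factors cancel (this is the paper's inequality $\bar q_j\bar d_j\le q_jd_j$). The only slip is the phrase ``each $t\in T''$ survives.'' It should say that each surviving parent retains an $\tilde\Omega(q_j)$ fraction of its children, so the surviving terminals form an $\tilde\Omega(\prod_j q_j)$ fraction of $T''$, which is exactly what your next clause uses.
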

\begin{proof}
    We start with the bound on the number of connected terminal pairs. Let us first analyze the subsampling on the $S'$ side of the tree. We can easily bound the number of surviving paths from $r$ up through each layer inductively. Indeed, at each layer $i$, since every parent node $u$ has degree in the range $[b_i/\log^c n, b_i]$, and since $p_i\in[1/b_i,1]$, by a Chernoff bound, the number of children it retains after subsampling will be tightly concentrated around $\min\{p_i\log^{c+2} n,1\}\deg(u)=\tilde\Theta(p_ib_i)$. Thus after this subsampling phase, we get our set of connected terminals $S''$ with
    $$\bar n_S :=|S''|= \tilde\Theta\left(\prod_{i=1}^{g}p_ib_i\right)=\tilde\Theta\left(n_S\cdot\prod_{i=1}^g p_i\right).$$ 

    Note that every terminal $s\in S''$ is still involved in $\tilde\Theta(D_S)$ terminal pairs with terminals in~$T'$, thus giving a total of $\tilde\Theta(\bar n_S D_S)$ terminal pairs in $P''$ still connected after the first subsampling phase. An $\tilde\Omega(1)$-fraction of these pairs are retained after the additional regularization step, thus giving us $\bar n_T:=|T''|$ terminals of remaining terminal-pair degree $\tilde\Theta(\bar D_T)$ such that $\bar n_T\bar D_T=\tilde\Theta(\bar n_SD_S)$.

    Note that the second subsampling phase (on the reduced $T'$ side) is defined exactly as the first, except with new probability parameters $\bar q_j=\min\{1,q_j d_j/\bar d_j\}$. Since we clearly have $\bar q_j\in[1/\bar d_j,1]$, the same concentration argument as before applies to the number of terminals pairs in $T''$ that survive the second subsampling phase. With high probability, up to polylogarithmic factors, it will be at least
    \begin{align*}
        \prod_{j=1}^h \bar q_j \bar d_j &= \prod_{j=1}^h \min\{q_jd_j,\bar d_j\}\\
        &\geq \prod_{j=1}^h q_j\bar d_j&\text{since }\bar d_j\leq d_j\text{ and }q_j\leq 1\\
        &=\tilde\Theta\left(\bar n_T\cdot\prod_{j=1}^h q_j\right)\\
        &=\tilde\Theta\left(\frac{\bar n_S D_S}{\bar D_T}\cdot\prod_{j=1}^h q_j\right)&\text{since }\bar n_T\bar D_T=\tilde\Theta(\bar n_S D_S)
    \end{align*}

Finally, as each of these terminals is involved in $\tilde\Theta(\bar D_T)$ connected terminal pairs, we get the desired lower bound on the total number of connected terminal pairs, which, up to polylogarithmic factors, is at least
$$\bar n_S D_S\cdot\prod_{j=1}^h q_j=\tilde\Theta(1)\cdot n_S D_S\cdot\prod_{i=1}^g p_i\cdot \prod_{j=1}^h q_j.$$ (Note that this is a lower bound. The concentration result holds for a possibly larger expectation.)

We now turn to bounding the expected cost of the solution. This follows more easily by linearity of expectation. Start with the $S'$ side. Recall that in layer $g'$ we have $\tilde\Theta(\prod_{i=1}^{g'}b_i)$ edges each with weight $\opt/\prod_{i=1}^{g'}\hat b_i$. Thus, the expected weight of edges in this layer that remain connected to $r$ is, up to polylogarithmic factors
$$\frac{\opt}{\prod_{i=1}^{g'}\hat b_i}\cdot \prod_{i=1}^{g'}p_i\cdot\prod_{i=1}^{g'}b_i,$$ which sums up to the left side of our stated bound on the cost. The cost on the $T$ side is calculated similarly, with layer $h'$ giving expected weight approximately equal to
$$\opt\cdot\prod_{j=1}^{h'}\frac{\bar q_j\bar d_j}{\hat d_j}\leq \opt\cdot\prod_{j=1}^{h'}\frac{q_jd_j}{\hat d_j},$$ where the inequality follows since we defined $\bar q_j=\min\{1,q_j d_j/\bar d_j\}\leq q_j d_j/\bar d_j$. The above bound, in turn, sums up to the right side of our stated cost bound, which completes the proof.
\end{proof}

Note that, by standard arguments, both the cost upper bound and the connectivity lower bound in Lemma~\ref{lem:subsampling} can be guaranteed simultaneously with high probability via Markov's inequality for the upper bound and repeated trials.

\section{Main algorithm}\label{sec:sqrt}

\subsection{A na\"ive bound}

Consider the outcome of applying the subsampling procedure of Algorithm~\ref{alg:subsample} with probability parameters $p_i=\hat b_i/b_i$ and $q_j=\hat d_j/d_j$. These clearly satisfy the required upper and lower bounds $1/b_i\leq p_i\leq 1$ and $1/d_j\leq q_j\leq 1$. And so, by Lemma~\ref{lem:subsampling}, we get a solution of expected cost $\tilde O(\opt)$ in which w.h.p.\ the number of connected demand pairs is, up to polylogarithmic factors, at least
$$n_SD_S\cdot\prod_{i=1}^{g}\frac{\hat b_i}{b_i}\cdot\prod_{j=1}^{h}\frac{\hat d_j}{d_j}=\tilde\Theta\left(n_SD_S\cdot\frac{\hat n_S}{n_S}\cdot\frac{\hat n_T}{n_T}\right)=\tilde\Omega\left(D_S\hat n_S\cdot\frac{\hat n_T}{n_T}\right).$$ By our assumption, this must be at most ${\hat k}=\hat n_S\hat D_S$ -- otherwise, we could just use this solution and get an $\tilde O(1)$-approximation to MR$k$T. Thus, we get\footnote{This bound is motivated by optimality, and could be enforced with significantly more careful preprocessing. The alternative $\tilde O(1)$-approximation gives a simpler way to arrive at the same bound.}

\begin{observation} We may henceforth assume that $$\hat n_T D_S\leq n_T\hat D_S.$$
\label{obs:naive-bound}
\end{observation}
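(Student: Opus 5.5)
The observation follows from the naive subsampling computation just carried out, together with a case split: either its conclusion holds, or the naive solution is already an $\tilde O(1)$-approximation. I will run Algorithm~\ref{alg:subsample} with $p_i=\hat b_i/b_i$ and $q_j=\hat d_j/d_j$; these lie in $[1/b_i,1]$ and $[1/d_j,1]$ because $1\le\hat b_i\le b_i$ and $1\le\hat d_j\le d_j$. This is the monotonicity ensured by aligning the optimum parameters to the lower end of each range. By Lemma~\ref{lem:subsampling}, every product $\prod_{i\le g'} p_ib_i/\hat b_i$ equals $1$, and likewise on the $T'$ side, so the expected cost is $\tilde O(\opt)$. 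Since there are only $O(1/\eps)$ layers, summing over them costs at most a constant factor. By the remark after Lemma~\ref{lem:subsampling} (Markov plus repeated trials), with high probability we obtain a tree of cost $\tilde O(\opt)$ connecting
\[
\tilde\Omega\Big(n_SD_S\prod_i \tfrac{\hat b_i}{b_i}\prod_j\tfrac{\hat d_j}{d_j}\Big)
\]
pairs.

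Next I simplify this quantity. We have $n_S=\tilde\Theta(\prod_i b_i)$, and likewise, since the optimum subtree is regularized with fan-outs in $[\hat b_i,\hat b_i\log^c n]$, $\hat n_S=\tilde\Theta(\prod_i\hat b_i)$; the same holds on the $T$ side. The count therefore becomes $\tilde\Omega(D_S\hat n_S\hat n_T/n_T)$. The one point needing care is that the polylogarithmic slack from the $\log^c n$ ranges compounds over the layers. Because the height is $O(1/\eps)$, this slack is only $\log^{O(c/\eps)}n$, which is still $\tilde O(1)$ for constant $\eps$.

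Now the case split. If $D_S\hat n_S\hat n_T/n_T>\hat k\cdot\mathrm{polylog}(n)$, where the polylog factor absorbs the hidden constants, then this tree connects at least $\min\{\hat k,\cdot\}$ pairs at cost $\tilde O(\opt)$. Pruning it to exactly $\hat k$ pairs, or simply reporting the ratio with $\hat k$ capped at $k$, gives ratio $\tilde O(\opt/\hat k)$, an $\tilde O(1)$-approximation to MR$k$T, and the algorithm outputs it. Otherwise $D_S\hat n_S\hat n_T/n_T\le\tilde O(\hat k)=\tilde O(\hat n_S\hat D_S)$. Dividing by $\hat n_S$ gives $\hat n_TD_S\le n_T\hat D_S$ up to polylogarithmic factors. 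The paper states the inequality without the tilde, so I will read it as holding up to $\tilde O(1)$, or absorb the factor by adjusting the guessed parameters within their polylog ranges.

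The only real obstacle is bookkeeping: the $\tilde\Theta$ factors must stay polylogarithmic. That relies on the constant height, and on $\hat k$ being correctly guessed so that a ratio comparison against $\opt/\hat k$ is legitimate.
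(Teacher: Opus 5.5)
Your proposal is correct and follows the paper's argument: run Algorithm~\ref{alg:subsample} with $p_i=\hat b_i/b_i$ and $q_j=\hat d_j/d_j$, obtain cost $\tilde O(\opt)$ with $\tilde\Omega(D_S\hat n_S\hat n_T/n_T)$ connected pairs, and conclude that either this is already an $\tilde O(1)$-approximation or $D_S\hat n_S\hat n_T/n_T\le\hat k=\tilde\Theta(\hat n_S\hat D_S)$. You are right that the resulting inequality only holds up to polylogarithmic factors; the paper elides this in the same way, and its footnote acknowledges that the bound is only ``motivated by optimality.''
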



\subsection{Algorithm for the large solution size regime}
We describe here the algorithm for the large $\min\{\hat n_S,\hat n_T\}$ case. It is best understood as a generalization of the following simple greedy algorithm for the special case of a star graph (when both $S'$ and $T'$ are at depth $1$), phrased here in the context of our regularized tree: Choose a set $S''$ of $\sqrt{n_T\hat n_S\hat D_T/D_S}$ terminals in $S'$. They contribute $\tilde\Theta(1)\cdot\sqrt{n_T\hat n_S\hat D_T D_S}$ terminal pairs. Out of the $n_T$ terminals in $T'$, let $T''$ be the $\sqrt{n_T\hat n_T\hat D_S/D_S}$ terminals with the most terminal-neighbors in $S''$. The number of terminal pairs they contribute is at least a $|T''|/|T'|$ fraction of the total, which comes out to $\tilde\Theta(1)\cdot\sqrt{n_T\hat n_S\hat D_T D_S}\cdot\sqrt{n_T\hat n_T\hat D_S/D_S}\cdot 1/n_T=\tilde\Theta(\hat k)$. On the other hand, the edges on the $S'$ side cost $\opt/\hat n_S$, and on the $T'$ side $\opt/\hat n_T$, and so the cost of this solution is $\opt\cdot\tilde\Theta(1)\cdot\sqrt{n_T\hat D_T/(\hat n_S D_S)}$ on the $S'$ side, and $\opt\cdot\tilde\Theta(1)\cdot\sqrt{n_T\hat D_S/(\hat n_T D_S)}$ (which is the same), thus giving an approximation ratio of $\tilde O\left(\sqrt{n_T\hat D_S/(\hat n_T D_S)}\right)$.

To get the same guarantee, both in terms of the number of connected terminal pairs and overall cost, in a regularized tree of arbitrary (but constant) depth, we will apply our subsampling procedure with carefully chosen probabilities at every layer. We describe this in Algorithm~\ref{alg:sqrt}.

\begin{algorithm}[h!]
    \caption{Generalized greedy algorithm}
    \label{alg:sqrt}
\begin{algorithmic}
\State \textbf{Input:} A regularized tree as in Lemma~\ref{lem:reg}.
    \State{\textbf{let}} \begin{equation}
        g_0:=\max\left\{0\leq g'\leq g\;\left|\ \prod_{i=1}^{g'}\frac{\hat b_i}{b_i}\geq\sqrt{\frac{D_S}{\hat D_S}\cdot\frac{\hat n_T}{n_T}}\right.\right\}.
        \label{eq:g0-def}
    \end{equation}
    \State{\textbf{let}} \begin{equation}
        h_0:=\max\left\{0\leq h'\leq h\;\left|\ \prod_{j=1}^{h'}\frac{\hat d_j}{d_j}\geq\sqrt{\frac{D_S}{\hat D_S}\cdot\frac{\hat n_T}{n_T}}\right.\right\}.
        \label{eq:h0-def}
    \end{equation}
    \If{$g_0=g$ (respectively, $h_0=h$)}
       \State{\textbf{add}} the entire subtree spanning $\{r\}\cup S'$ (resp.\ $\{r\}\cup T'$) 
       to the solution.
       \State{\textbf{call} the one-sided procedure from Lemma~\ref{lem:one-side} with $S_0=S'$ (resp.\  $T_0=T'$) and $k'=\tilde\Omega({\hat k})$.}
       \State{\textbf{add} the resulting subtree on the $T$-side (resp.\  $S$-side) to the solution.}
       \State{\textbf{return} the combined subtree.}
    \Else
       \State{\textbf{call} Algorithm~\ref{alg:subsample}} 
       with probability parameters $p_1,\ldots,p_g$ and $q_1,\ldots, q_h$, defined as follows:
       \begin{align*}
           &p_i=1&\forall i\leq g_0\\
           &p_{g_0+1}=\sqrt{\frac{\hat D_S}{D_S}\cdot\frac{n_T}{\hat n_T}}\cdot\prod_{i=1}^{g_0+1}\frac{\hat b_i}{b_i}\\
           &p_i=\frac{\hat b_i}{b_i}&\forall i\geq g_0+2\\
           &q_j=1&\forall j\leq h_0\\
           &q_{h_0+1}=\sqrt{\frac{\hat D_S}{D_S}\cdot\frac{n_T}{\hat n_T}}\cdot\prod_{j=1}^{h_0+1}\frac{\hat d_j}{d_j}\\
           &q_j=\frac{\hat d_j}{d_j}&\forall j\geq h_0+2
       \end{align*}
       \State{\textbf{return} the resulting subtree.}
    \EndIf
\end{algorithmic}
\end{algorithm}

\begin{lemma}
    In our setting, algorithm~\ref{alg:sqrt} gives an $\tilde O\left(\sqrt{n_T\hat D_S/(\hat n_T D_S)}\right)$-approximation for MR$k$T.
    \label{lem:sqrt}
\end{lemma}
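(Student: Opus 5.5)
Throughout, write
$$A:=\sqrt{\frac{D_S}{\hat D_S}\cdot\frac{\hat n_T}{n_T}},$$
so the target ratio is $1/A$. By Observation~\ref{obs:naive-bound}, $A\leq 1$. Each factor $\hat b_i/b_i$ lies in $(0,1]$, so the prefix products $\prod_{i\le g'}\hat b_i/b_i$ are non-increasing in $g'$. The product over the empty prefix is $1\geq A$, so $g_0$ (and likewise $h_0$) in~\eqref{eq:g0-def}--\eqref{eq:h0-def} is well defined. The plan is to verify the claim separately in the two branches of Algorithm~\ref{alg:sqrt}. In each branch the goal is to show that $\tilde\Omega(\hat k)$ pairs are connected at cost $\tilde O(\opt/A)$.

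\textbf{Main branch ($g_0<g$ and $h_0<h$).} First check that the probabilities are admissible for Lemma~\ref{lem:subsampling}.
\begin{itemize}
\item For $i\neq g_0+1$ this is immediate.
\item For $i=g_0+1$: maximality of $g_0$ gives $\prod_{i\le g_0+1}\hat b_i/b_i<A$, so $p_{g_0+1}<1$.
\item Since $\prod_{i\le g_0}\hat b_i/b_i\geq A$, we get $p_{g_0+1}\geq \hat b_{g_0+1}/b_{g_0+1}\geq 1/b_{g_0+1}$.
\end{itemize}
The same argument applies to $q_{h_0+1}$.

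Telescoping, $\prod_{i=1}^g p_i=\frac1A\prod_{i=1}^g\hat b_i/b_i=\tilde\Theta\bigl(\frac1A\cdot\hat n_S/n_S\bigr)$, and similarly $\prod_j q_j=\tilde\Theta\bigl(\frac1A\cdot\hat n_T/n_T\bigr)$. Plugging these into part~1 of Lemma~\ref{lem:subsampling}, the number of connected pairs is w.h.p.\ up to polylogarithmic factors at least
$$n_SD_S\cdot\frac{1}{A^2}\cdot\frac{\hat n_S\hat n_T}{n_Sn_T}=\hat n_S\hat D_S=\tilde\Theta(\hat k).$$

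For the cost (part~2 of Lemma~\ref{lem:subsampling}), consider the prefix products $\prod_{i\le g'}p_ib_i/\hat b_i$.
\begin{itemize}
\item For $g'\le g_0$ the product equals $\prod_{i\le g'}b_i/\hat b_i\le 1/A$, again by the definition of $g_0$ and monotonicity.
\item For $g'\ge g_0+1$ the factors telescope to exactly $1/A$.
\end{itemize}
The $T$-side prefix products are bounded in the same way. So the expected cost is $\tilde O(\opt/A)$, and by the remark after Lemma~\ref{lem:subsampling} both bounds hold simultaneously with high probability.

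\textbf{Degenerate branch ($g_0=g$; the case $h_0=h$ is symmetric).} Here $\prod_{i\le g'}\hat b_i/b_i\geq A$ for every $g'\le g$. Layer $g'$ of the full $S'$-side tree costs $\tilde\Theta\bigl(\opt\cdot\prod_{i\le g'}b_i/\hat b_i\bigr)$ by the uniform-weight normalization, so every layer costs $\tilde O(\opt/A)$. Summing over the $O(1)$ layers, the whole $S'$ side costs $\tilde O(\opt/A)$.

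For the other side, observe that the regularized optimum solution uses a set $\hat S\subseteq S'$ and a $T$-side subtree of cost at most $\opt$ satisfying $\tilde\Omega(\hat k)$ pairs in $\hat S\times T'\subseteq S'\times T'$. Hence the exact algorithm of Lemma~\ref{lem:one-side}, run with $S_0=S'$ and $k'$ equal to this count, returns a $T$-side tree of cost at most $\opt$. The total cost is therefore $\tilde O(\opt/A)$ for $\tilde\Omega(\hat k)$ pairs, giving ratio $\tilde O(1/A)$. When $h_0=h$, the roles of $S$ and $T$ swap, and the same threshold $A$ gives the same bound.

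\textbf{Main obstacle.} I expect the delicate step to be the main-branch bookkeeping: checking that the single ``switching'' layer $g_0+1$ (resp.\ $h_0+1$) simultaneously keeps $p_{g_0+1}$ within $[1/b_{g_0+1},1]$ and makes the cost prefixes cap at exactly $1/A$. Both facts hinge on the maximality in the definition of $g_0$ together with Observation~\ref{obs:naive-bound}. The connectivity calculation is then a direct application of Lemma~\ref{lem:subsampling}.
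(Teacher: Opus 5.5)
Your proposal is correct and follows the paper's proof essentially step for step: the threshold layers $g_0,h_0$, the degenerate branch handled by buying one full side plus Lemma~\ref{lem:one-side}, and the telescoping connectivity and cost computations in the main branch. The one small difference is the admissibility check for $p_{g_0+1}$, where you use $\prod_{i\le g_0}\hat b_i/b_i\geq A$ to get $p_{g_0+1}\geq \hat b_{g_0+1}/b_{g_0+1}\geq 1/b_{g_0+1}$ directly; the paper only derives the weaker bound $p_{g_0+1}\geq\prod_{i\le g_0+1}\hat b_i/b_i$ from Observation~\ref{obs:naive-bound} and so contracts the first $g_0$ layers, a step your argument makes unnecessary.
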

\begin{proof}
    We first note that the layers where we can afford to buy all the edges are precisely those indexed by $i\leq g_0$ and $j\leq h_0$. Indeed, by Observation~\ref{obs:naive-bound}, the expression on the right hand side of the inequalities in both~\eqref{eq:g0-def} and~\eqref{eq:h0-def} is at most 1, so these inequalities hold at least for the empty product (i.e., for $g'=0$, $h'=0$). Also note that the product on the left hand side is non-increasing with $g'$ (resp.\  $h'$), so the inequality holds for $g'$ (resp.\  $h'$) iff $g'\leq g_0$ (resp.\  $h'\leq h_0$). Since every edge in layer $g'$ has weight $\opt/\prod_{i=1}^{g'}\hat b_i$, and the number of edges in this layer is $\tilde\Theta(\prod_{i=1}^{g'}\hat b_i)$, for any $g'\leq g_0$, the cost of layer $g'$ on the $S'$ side is bounded, up to polylogarithmic factors, by
    $$\opt\cdot\prod_{i=1}^{g'}\frac{b_i}{\hat b_i}\leq \opt\cdot\sqrt{\frac{\hat D_S}{D_S}\cdot\frac{n_T}{\hat n_T}},$$ which is consistent with the desired approximation guarantee for this algorithm assuming $\tilde\Omega({\hat k})$ demand pairs are connected. A similar bound holds for layers $h'\leq h_0$ on the $T'$ side. 
    
    Thus, if $g_0=g$, then buying the entire subtree spanning $\{r\}\cup S'$ costs us no more than we are willing to pay, and by the assumption about the optimum solution, the additional one-sided tree procedure from Lemma~\ref{lem:one-side} on the $T$-side buys us a subtree of additional cost at most $\opt$ which together with the $S'$-side tree connects at least $\tilde\Omega({\hat k})$ pairs. The case where $h_0=h$ is symmetric.

    Now consider the case where $g_0<g$ and $h_0<h$ (in particular, they could be $0$). We first note that for the parameters $(p_i)_{i\in[g]}$ and $(q_j)_{j\in[h]}$ defined in Algorithm~\ref{alg:sqrt}, the bounds $1/b_i\leq p_i\leq 1$ and $1/d_j\leq q_j\leq 1$ required for Algorithm~\ref{alg:subsample} (see Lemma~\ref{lem:subsampling}) clearly hold at least for all $i\in[g]\setminus\{g_0+1\}$ and $j\in[h]\setminus\{h_0+1\}$. However, they may not hold for $p_{g_0+1}$ and $q_{h_0+1}$. To justify the use of Algorithm~\ref{alg:subsample} with these parameters, note that Algorithm~\ref{alg:sqrt} is equivalent to adding all edges in $S'$-side layers $i\leq g_0$ and $T'$-side layers $j\leq h_0$, contracting these layers, and then calling the subsampling procedure Algorithm~\ref{alg:subsample} on the partially contracted tree. In this tree, what was originally $S'$-side layer $g_0+1$ is now a layer with a single parent ($r$) and fan-out $\tilde\Theta(\mathbf{b})$, where $\mathbf{b}=\prod_{i=1}^{g_0+1}b_i$, and similarly the new first layer on the $T'$-side has fan-out $\tilde\Theta(\mathbf{d})$ for $\mathbf{d}=\prod_{j=1}^{h_0+1}d_j$. Moreover, by definition of $g_0$ and $h_0$, we have $p_{g_0+1},q_{h_0+1}\leq 1$, and by Observation~\ref{obs:naive-bound}, we have $p_{g_0+1}\geq \prod_{i=1}^{g_0+1}(\hat b_i/b_i)\geq 1/\mathbf{b}$, and $q_{h_0+1}\geq \prod_{j=1}^{h_0+1}(\hat d_j/d_j)\geq 1/\mathbf{d}$.
    
    It remains therefore only to analyze the number of connected terminal pairs (guaranteed with high probability) and expected solution cost specified in Lemma~\ref{lem:subsampling}. Let us start with the number of connected terminal pairs. Up to polylogarithmic factors, the guaranteed lower bound is
    \begin{align*}
        n_SD_S\cdot\prod_{i=1}^g p_i\cdot\prod_{j=1}^h q_j&=n_SD_S\cdot\frac{\hat D_S}{D_S}\cdot\frac{n_T}{\hat n_T}\cdot\prod_{i=1}^{g}\frac{\hat b_i}{b_i}\cdot\prod_{j=1}^{h}\frac{\hat d_j}{d_j}\\
        &=n_SD_S\cdot\frac{\hat D_S}{D_S}\cdot\frac{n_T}{\hat n_T}\cdot \frac{\hat n_S}{n_S}\cdot\frac{\hat n_T}{n_T}\\
        &=\hat D_S\hat n_S\\
        &=\tilde\Omega({\hat k}).
    \end{align*}

    Since we connect $\tilde\Omega({\hat k})$ demand pairs, 
    it remains to show that the solution cost is at most $\opt\cdot\tilde O\left(\sqrt{n_T\hat D_S/(\hat n_T D_S)}\right)$. We have already shown this for $S'$-side layers $i\leq g_0$ and $T'$-side layers $j\leq h_0$. For any layer $g'\geq g_0+1$, up to polylogarithmic factors,  Lemma~\ref{lem:subsampling} gives us cost bound
    \begin{align*}
        \opt\cdot\prod_{i=1}^{g'}\frac{p_ib_i}{\hat b_i}
        &=\opt\cdot\prod_{i=1}^{g_0}\frac{b_i}{\hat b_i}\cdot p_{g_0+1}\cdot\frac{b_{g_0+1}}{\hat b_{g_0+1}}\cdot\prod_{i=g_0+2}^{g'}\frac{(\hat b_i/b_i)b_i}{\hat b_i}\\
        &=\opt\cdot p_{g_0+1}\cdot \prod_{i=1}^{g_0+1}\frac{b_i}{\hat b_i}\\
        &=\opt\cdot \sqrt{\frac{\hat D_S}{D_S}\cdot\frac{n_T}{\hat n_T}}\cdot\prod_{i=1}^{g_0+1}\frac{\hat b_i}{b_i}\cdot \prod_{i=1}^{g_0+1}\frac{b_i}{\hat b_i}\\
        &=\opt\cdot\sqrt{\frac{\hat D_S}{D_S}\cdot\frac{n_T}{\hat n_T}},
    \end{align*}
    and the bound on the $T'$-side follows similarly, proving our approximation guarantee.
\end{proof}


\section{Improving the approximation}\label{sec:improve}

We introduce an additional idea to push the overall approximation guarantee down from $n^{1/3+\eps}$ to $n^{0.3+\eps}$ (or from $\tilde O(n^{1/3})$ to $\tilde O(n^{0.3})$ in our current setting). The additional idea is an algorithm that buys the entire $P'$-neighborhood $\Gamma_{P'}(t)\subseteq S'$ of one terminal $t\in T$, and then looks for a cheap tree on the $T$-side which connects many terminals in $T$ to the terminals in this neighborhood. Its guarantee is summarized as follows:

\begin{lemma}
    In our setting, there is an algorithm that produces a subtree with the following guarantees:
    \begin{enumerate}
        \item The solution connects at least $\tilde\Omega(\hat D_S\hat D_T)$ demand pairs.
        \item The cost of the solution is at most $\opt\cdot\tilde O(D_T/\hat D_T)$.
    \end{enumerate}
    In particular, this gives an $\tilde O((D_T/\hat D_T)\cdot {\hat k}/(\hat D_S\hat D_T))=\tilde O(\hat n_TD_T/(\hat D_S\hat D_T))$-approximation for MR$k$T.
    \label{lem:2-step}
\end{lemma}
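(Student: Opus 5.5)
We guess a terminal $t^*\in T'$ that participates in the regularized optimum solution, by trying all $n_T$ candidates. We then buy the path from $t^*$ to $r$ and the entire rooted subtree spanning $\{r\}\cup\Gamma_{P'}(t^*)$. After that we invoke Lemma~\ref{lem:one-side} with the roles of $S$ and $T$ exchanged: the fixed side is $S_0=\Gamma_{P'}(t^*)$, and the demand target is $k'=\tilde\Theta(\hat D_S\hat D_T)$. The output is the union of the three pieces.

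\textbf{Connectivity.} In the optimum, $t^*$ is paired with $\hat D_T$ terminals $s\in\Gamma_{P'}(t^*)$ (up to polylog factors), and each such $s$ has $\tilde\Theta(\hat D_S)$ partners in the optimum. The optimum's $T$-side tree, which costs at most $\opt$, therefore connects at least $\tilde\Omega(\hat D_S\hat D_T)$ pairs to $S_0$. Here each pair $(s,t)$ is counted once, since it is a pair. Since Lemma~\ref{lem:one-side} is exact, it finds a $T$-side tree of cost at most $\tilde O(\opt)$ achieving this count.

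\textbf{Cost of the neighborhood tree.} This is the main step, and it uses the regularity of Lemma~\ref{lem:reg}. In the tree spanning $\{r\}\cup\Gamma_{P'}(t^*)$, layer $g'$ has $\tilde\Theta(\prod_{i\le g'}\tilde b_i)$ edges, each of weight $\opt/\prod_{i\le g'}\hat b_i$. So it suffices to show
\[
\prod_{i\le g'}\frac{\tilde b_i}{\hat b_i}\le \tilde O\!\left(\frac{D_T}{\hat D_T}\right)\quad\text{for every } g'.
\]

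For the upper-layer factor, note that the number of layer-$g'$ nodes on the $S$-side with a descendant in $\Gamma_{P'}(t^*)$ is $\tilde\Theta(\prod_{i\le g'}\tilde b_i)$.

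For the lower-layer factor, each such node's subtree contains $\tilde\Theta(\prod_{i>g'}\tilde b_i)$ neighbors of $t^*$, so $D_T=\tilde\Theta(\prod_{i\le g}\tilde b_i)$.

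The optimum's neighborhood of $t^*$ has size $\hat D_T=\tilde\Theta(\prod_i\hat{\tilde b}_i)$, with the hatted fan-outs the optimum's analogs. We use monotonicity: $\hat{\tilde b}_i\le \hat b_i$ and $\hat{\tilde b}_i\le\tilde b_i$. From these,
\[
\prod_{i\le g'}\frac{\tilde b_i}{\hat b_i}\le\prod_{i\le g'}\frac{\tilde b_i}{\hat{\tilde b}_i}\le\prod_{i\le g}\frac{\tilde b_i}{\hat{\tilde b}_i}=\tilde\Theta\!\left(\frac{D_T}{\hat D_T}\right).
\]
The middle inequality holds because every factor is at least $1$.

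\textbf{Cost of the path.} The path from $t^*$ to $r$ costs at most $\opt$, since $t^*$ lies in the optimum.

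\textbf{Ratio.} The total cost is $\opt\cdot\tilde O(D_T/\hat D_T)$ for $\tilde\Omega(\hat D_S\hat D_T)$ pairs. Dividing and using $\hat k=\tilde\Theta(\hat n_T\hat D_T)$ gives the stated ratio $\tilde O(\hat n_TD_T/(\hat D_S\hat D_T))$.

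The main obstacle is justifying the monotonicity $\hat{\tilde b}_i\le\hat b_i$ and the per-layer counting within the regularization guarantees. If needed, I would guess the optimum's neighborhood fan-out parameters $\hat{\tilde b}_i$ as well.
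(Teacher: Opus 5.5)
Your proposal is correct and follows essentially the same argument as the paper. You guess $t$, buy the whole subtree spanning $\{r\}\cup\Gamma_{P'}(t)$ with its cost bounded via $\doub{b}_i\le\min\{\hat b_i,\tilde b_i\}$ (your $\hat{\tilde b}_i$), and complete the $T'$ side with the exact one-sided procedure of Lemma~\ref{lem:one-side}; the only deviation is that you also buy the path from $t^*$ to $r$, which is unnecessary (the counted pairs join $\Gamma_{P'}(t^*)$ to other $T'$ terminals) but harmless, since its cost $\opt$ fits within the $\opt\cdot\tilde O(D_T/\hat D_T)$ budget.
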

\begin{proof}
    We guess some terminal $t\in T'$ that participates in our regularized optimum solution (by trying all possible $t\in T'$). By Lemma~\ref{lem:reg}, the $t$-constrained subtree, that is, the subtree spanning $\{r\}\cup\Gamma_{P'}(t)$ has quasi-regular fan-out $\tilde b_i$ in layer $i$, and the $t$-constrained subtree in the optimum solution has quasi-regular fan-out $\doub{b}_i$ in layer $i$ such that $\doub{b}_i\leq\min\{\hat b_i,\tilde b_i\}$,\footnote{Note that this is saying that the fan-out in layer $i$ of the $t$-constrained optimum sub-tree is bounded by both the corresponding fan-out in the full optimum tree and by the fan-out in the entire $t$-constrained subtree.} and $D_T=\prod_{i=1}^{g}\tilde b_i$ and $\hat D_T=\prod_{i=1}^g\doub{b}_i$. Thus, up to polylogarithmic factors, the entire $t$-constrained subtree 
    has weight at most
    $$\max_{g'\in[g]}\opt\cdot\prod_{i=1}^{g'}\frac{\tilde b_i}{\hat b_i}\leq \max_{g'\in[g]}\opt\cdot\prod_{i=1}^{g'}\frac{\tilde b_i}{\doub{b}_i}=\opt\cdot\prod_{i=1}^{g}\frac{\tilde b_i}{\doub{b}_i}=\opt\cdot\tilde\Theta(1)\cdot\frac{D_T}{\hat D_T},$$ which is within the claimed upper bound on the cost.

    For a correctly guessed $t$ that participates in the regularized optimum solution, we know that $\Gamma_{P'}(t)\subseteq S'$ must contain at least $\tilde\Omega(\hat D_T)$ terminals that neighbor $t$ in the optimum solution, and thus we know that there is a subtree of the tree spanning $\{r\}\cup T'$ of cost at most $\opt$ whose leaves form at least $\Omega(\hat D_S \hat D_T)$ pairs in $P'$ when coupled with $\Gamma_{P'}(t)$. The additional cost is even less than the required upper bound, and the number of demand pairs satisfies the required lower bound. We can find such a tree using the algorithm of Lemma~\ref{lem:one-side} (with sides reversed, first fixing terminal set $S_0=\Gamma_{P'}(t)$), and so the lemma follows.
\end{proof}

Our final guarantee follows by trading this guarantee against that of our main algorithm, as well as previous algorithms.

\begin{theorem}
    In our setting, there is an $\tilde O(n^{3/10})$-algorithm for MR$k$T.
    \label{thm:main-technical}
\end{theorem}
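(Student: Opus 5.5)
The plan is to take the minimum of the approximation ratios of all the algorithms available in our setting. Each ratio is a monomial in the instance parameters $n_S,n_T,D_S,D_T$ and the guessed optimum parameters $\hat n_S,\hat n_T,\hat D_S,\hat D_T$. Because they are monomials, a suitable weighted geometric mean of them is free of the parameters except for powers of $n$. We run every algorithm (over all guesses of the optimum parameters) and return the best ratio solution, so it suffices to show that the minimum ratio $x$ satisfies $x=\tilde O(n^{3/10})$.

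The ratios in play are the following.
\begin{itemize}
\item Corollary~\ref{cor:hat-n} gives $x\le\tilde O(\hat n_S)$ and $x\le \tilde O(\hat n_T)$.
\item The $O(\sqrt{k})$-approximation of~\cite{GHNR10} gives $x\le O(\sqrt{\hat k})$ (after guessing $\hat k$), with $\hat k=\tilde\Theta(\hat n_S\hat D_S)=\tilde\Theta(\hat n_T\hat D_T)$.
\item Lemma~\ref{lem:sqrt} gives $x^2\le \tilde O\bigl(n_T\hat D_S/(\hat n_T D_S)\bigr)$. With the roles of $S'$ and $T'$ exchanged (Observation~\ref{obs:naive-bound} and Algorithm~\ref{alg:sqrt} are symmetric), it also gives $x^2\le\tilde O\bigl(n_S\hat D_T/(\hat n_S D_T)\bigr)$.
\item Lemma~\ref{lem:2-step} gives $x\le \tilde O\bigl(\hat n_T D_T/(\hat D_S\hat D_T)\bigr)$, and symmetrically $x\le\tilde O\bigl(\hat n_S D_S/(\hat D_S\hat D_T)\bigr)$.
\end{itemize}
Besides these, we use the identities $n_SD_S=\tilde\Theta(n_TD_T)=\tilde\Theta(|P'|)$ and $\hat n_S\hat D_S=\tilde\Theta(\hat n_T\hat D_T)$. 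We also use the trivial bounds $n_S,n_T\le n$, $\hat D_T\le D_T\le n_S$, and $\hat D_S\le D_S\le n_T$, together with the monotonicity $\hat n_S\le n_S$, $\hat n_T\le n_T$.

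The first combination step multiplies the greedy bound by the matching two-step bound. This gives
\[
x^3\le\tilde O\Bigl(\frac{n_TD_T}{D_S\hat D_T}\Bigr)=\tilde O\Bigl(\frac{n_S}{\hat D_T}\Bigr)\le\tilde O\Bigl(\frac{n}{\hat D_T}\Bigr),
\]
and symmetrically $x^3\le \tilde O(n/\hat D_S)$. Taken alone this only recovers the $n^{1/3}$ regime when the solution degrees are large. The remaining work is to handle the case where $\hat D_S$ or $\hat D_T$ is small, using the bounds in terms of $\hat n_S,\hat n_T$ and $\sqrt{\hat k}$.

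The natural move is to write $\hat D_T=\hat k/\hat n_T$ and combine with $x\le\hat n_T$ and $x^2\le\hat k$. Done naively, this pulls $\hat n_T$ in the wrong direction. I would therefore first set up the small linear program over the exponents, working in logarithms base $n$ with variables $\log_n$ of each parameter. Its constraints are the listed inequalities and identities, and its objective is to maximize $\log_n x$. I would solve it to find the extremal configuration and read off the dual multipliers, which are the exponents of the geometric mean certifying $x^{10}\le \tilde O(n^3)$.

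I expect the main obstacle to be exactly this certificate. One has to find which of the original, unsimplified forms of the greedy and two-step ratios to keep: for instance, retaining $D_T/\hat D_T$ and $n_T/\hat n_T$ as separate factors rather than collapsing them via $n_TD_T=n_SD_S$. One also has to decide which of the trivial bounds ($n_S,n_T\le n$ versus $D_T\le n_S$) to spend, so that all parameters other than $n$ cancel. I would first try combining both greedy bounds, both two-step bounds, one $\hat n$ bound and the $\sqrt{\hat k}$ bound with small integer weights summing to total $x$-degree $10$. I would then check that all parameters other than $n$ cancel, with the $n$-exponent collected to at most $3$. Finally, the $n^{\eps}$ loss from Lemma~\ref{lem:CEK} and the polylogarithmic losses from regularization and FRT would be absorbed to conclude Theorem~\ref{thm:main}.
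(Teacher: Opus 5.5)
Your reduction to bounding the minimum ratio $x$ is fine, and the ingredients you list are enough. But the proposal stops exactly where the proof has content. You never exhibit a combination of the bounds giving $x^{10}\le\tilde O(n^3)$. The only inequalities you actually derive, $x^3\le\tilde O(n_S/\hat D_T)$ and its mirror image, give by themselves only $x\le\tilde O(n^{1/3})$. Saying that LP duality would supply the multipliers presupposes that the LP value is at most $3/10$, and that is precisely the theorem. So as written, nothing beyond $n^{1/3}$ has been proved. (Incidentally, your two greedy bounds are the same bound. Since $n_SD_S=\tilde\Theta(n_TD_T)$ and $\hat n_S\hat D_S=\tilde\Theta(\hat n_T\hat D_T)$, one has $n_T\hat D_S/(\hat n_TD_S)=\tilde\Theta(n_S\hat D_T/(\hat n_SD_T))$, which is how the paper rewrites Lemma~\ref{lem:sqrt} in its Case 1.)

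The obstacle you identified is easy to remove. The upper bound on $\hat n_T$ should come not from $x\le\hat n_T$, which points the wrong way, but from the greedy bound weakened by $\hat D_S\le D_S$, i.e.\ $x^2\le\tilde O(n_T/\hat n_T)$. Multiply together three inequalities:
\begin{itemize}
\item your $x^3\le\tilde O(n_S/\hat D_T)$;
\item $x^2\le\tilde O(\hat n_T\hat D_T)$;
\item this weakened greedy bound.
\end{itemize}
The result is $x^7\le\tilde O(n_Sn_T)\le\tilde O(n^2)$, which implies the theorem with room to spare, since $2/7<3/10$.

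The paper instead splits on whether $D_T/\hat D_T$ is above or below $n_S^{2/5}/n_T^{3/10}$. In Case 1 it multiplies the greedy bound by $x\le\hat n_S$, getting $x^3\le\tilde O(n_S\hat D_T/D_T)$. In Case 2 it combines three bounds with weights $1/7,4/7,2/7$:
\begin{itemize}
\item the two-step bound;
\item the greedy bound, used only in the weakened form $(n_Sn_T/(\hat n_S\hat n_T))^{1/4}$;
\item $\sqrt{\hat k}$.
\end{itemize}
This gives $x^7\le\tilde O(n_Sn_TD_T/\hat D_T)$, so the case condition is needed to control $D_T/\hat D_T$. The two case bounds multiply to the single certificate $x^{10}\le\tilde O(n_S^2n_T)$. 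Keeping one copy of the greedy bound unweakened, as above, makes the case split unnecessary.
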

\begin{proof}
    We split the algorithm into two cases:

\paragraph{Case 1: $D_T/\hat D_T\geq n_S^{2/5}/n_T^{3/10}$.} In this case, equivalently rewriting the guarantee of Lemma~\ref{lem:sqrt} gives us an approximation ratio of (up to polylogarithmic factors) at most $$\sqrt{\frac{\hat D_T}{D_T}\cdot\frac{n_S}{\hat n_S}}\leq
\sqrt{\frac{n_T^{3/10}}{n_S^{2/5}}\cdot \frac{n_S}{\hat n_S}}=\sqrt{\frac{n_S^{3/5}n_T^{3/10}}{\hat n_S}},$$ and trading this off with the $\tilde O(\hat n_S)$ guarantee of Corollary~\ref{cor:hat-n} gives a ratio of $$\tilde O(n_S^{1/5}n_T^{1/10}),$$ which gives the approximation guarantee for this case.

\paragraph{Case 2: $D_T/\hat D_T\leq n_S^{2/5}/n_T^{3/10}$.} Consider the algorithm of Lemma~\ref{lem:2-step}. In this case, up to polylogarithmic factors, it gives an approximation guarantee of
$$\frac{\hat n_T}{\hat D_S}\cdot\frac{D_T}{\hat D_T}\leq \frac{\hat n_T}{\hat D_S}\cdot \frac{n_S^{2/5}}{n_T^{3/10}}.$$

This in itself doesn't give us a good approximation, but we combine it with two more algorithms. First, note that we can bound the approximation guarantee of Lemma~\ref{lem:sqrt} by $$\tilde O\left(\min\left\{\sqrt{n_T/\hat n_T},\sqrt{n_S/\hat n_S}\right\}\right)\leq\tilde O\left(\left(\frac{n_Sn_T}{\hat n_S\hat n_T}\right)^{1/4}\right).$$ Finally, our regularized instance (from Lemma~\ref{lem:reg}) is in particular a Steiner $k$-Forest instance with parameter $\tilde\Theta(\hat k)$, so the $O(\sqrt{k})$-approximation of~\cite{GHNR10} here gives us an $\tilde O\left(\sqrt {\hat k}\right)=\tilde O\left(\sqrt{\hat n_T\hat D_T}\right)$-approximation for MR$k$T.\footnote{Technically, this could return a forest rather than a tree, but that neither hurts the analysis nor is necessary -- we can always take the connected component with the best ratio in the solution.} Combining these three guarantees, up to polylogarithmic factors, we can bound our approximation guarantee by
\begin{align*}
    \min\left\{\frac{\hat n_S}{\hat D_T}\cdot \frac{n_S^{2/5}}{n_T^{3/10}},   \left(\frac{n_Sn_T}{\hat n_S\hat n_T}\right)^{1/4},   \sqrt{\hat n_T\hat D_T}\right\}&\leq \left(\frac{\hat n_S}{\hat D_T}\cdot \frac{n_S^{2/5}}{n_T^{3/10}}\right)^{1/7}\left(\left(\frac{n_Sn_T}{\hat n_S\hat n_T}\right)^{1/4}\right)^{4/7}\left(\sqrt{\hat n_T\hat D_T}\right)^{2/7}\\
    &=\left(\frac{n_S^{2/5}}{n_T^{3/10}}\cdot n_S n_T\right)^{1/7}\\
    &=n_S^{1/5}n_T^{1/10},
\end{align*}
which gives the required guarantee in the second case, and proves the theorem.
\end{proof}

This, in turn, gives us our main guarantee for Steiner $k$-Forest.
\begin{proof}[Proof of Theorem~\ref{thm:main}:]
    By the discussion in Section~\ref{sec:preliminaries}, an $\tilde O(n^{\alpha})$-approximation for MR$k$T in tree graphs translates to an $\tilde O(n^{\alpha})$-approximation for Steiner $k$-Forest in general graphs via the FRT~\cite{FRT04} embedding and the Steiner $k$-Forest to MR$k$T reduction we described. Moreover, thanks to Lemma~\ref{lem:CEK}, we may assume the input tree has bounded height $O(1/\eps)$ in exchange for an additional $n^{\eps}$-factor loss in the approximation guarantee. The additional polylogarithmic losses stem mainly from Lemma~\ref{lem:reg}, which, along with the various probabilistic arguments, brings the polylogarithmic-factor loss up to some $\log^{f(1/\eps)}n$. Combining this with a correct guess of parameters, and the correct bucket $P'$ of pairs containing $\tilde\Omega (\hat k)$ terminal-pairs connected in the optimum solution, Theorem~\ref{thm:main-technical} therefore guarantees an $O\left(n^{3/10+\eps}\cdot\log^{f(1/\eps)}n\right)$-approximation for Steiner $k$-Forest. The additional polylogarithmic factors can be absorbed by the polynomial factor by, say, choosing $\eps/2$ in place of~$\eps$, which proves the theorem.
\end{proof}

\section{Conditional hardness as a function of $k$}\label{sec:hard}

Hajiyaghayi and Jain~\cite{HJ06} give an approximation preserving reduction from Smallest $k$-Edge Subgraph (which they call Minimum $k$-Edge Coverage) to Steiner $k$-Forest (Theorem 6.2 of their paper). In fact, the reduction is quite simple: Given a Smallest $k$-Edge Subgraph instance $\langle G=(V,E),k\rangle$, let $H$ be a unit weight star with root $r\not\in V$ and terminals/leaves $V$, and for every edge $(u,v)\in E$ add a terminal pair $(u,v)$. Clearly, there is a $k$-edge subgraph in $G$ on $\ell$ vertices iff there is a weight~$\ell$ subtree of $H$ connecting at least $k$ terminal pairs.

Recently, Jones et al.~\cite{DkSSoS} showed that in a random Erd\H{o}s-R\'{e}nyi graph, in certain parameter regimes, there is a degree $n^{\Omega(1)}$ Sum-of-Squares relaxation (i.e., a \emph{pseudo-expectation}) for a small dense subgraph which is considerably denser than anything such a graph could contain. It had already been conjectured earlier (the ``Dense versus Random" conjecture -- see, e.g.,~\cite{chlamtavc2017minimizing}) that it is hard to distinguish whether the input is a random graph or a graph containing a dense subgraph, for these same parameter regimes. More specifically, the SoS gaps\footnote{Note that their construction treats the vertex and edge counts symmetrically, so the solution can be interpreted as a relaxation for either Densest $k$-Subgraph or Smallest $k$-Edge Subgraph.} of~\cite{DkSSoS} show that, w.h.p., the Erd\H{o}s-R\'{e}nyi graph $G(n,n^{-\beta})$ admits an $n^{\Omega(1)}$-degree pseudo-expectation (falsely) certifying the existence of an $\ell=n^{\alpha}$ subgraph with $k=n^{2\alpha-\gamma}$ edges, where $\alpha\in(0,1/2],\beta,\gamma\in(0,1)$ are constants (independent of the graph size) which satisfy
$$\gamma>\alpha\beta.$$ 
More specifically, by the analysis in~\cite{DkSSoS}, this holds for a pseudo-expectation of degree $n^{\Omega(\gamma-\alpha\beta)}$~\cite{potechin}.

To see how this implies conditional hardness for Steiner $k$-Forest, for any sufficiently small constant $\eps>0$, consider the following parameter setting:
$$\alpha=\eps/2,\qquad\beta=\eps,\qquad\text{and}\qquad\gamma=\eps^2/(1+\eps).$$
For these parameters, we have conditional hardness of distinguishing $G(n,n^{-\eps})$ from a graph with an $\ell=n^{\eps/2}$-subgraph containing at least $k=n^{\eps-\eps^2/(1+\eps)}$ edges. It is easy to check that, for these parameters, in $G(n,n^{-\eps})$, with high probability we need choose at least $\tilde\Omega(k)$ vertices to get $k$ edges. Thus, we get a gap of roughly
$$\frac{k}{\ell}=\frac{k}{n^{\eps/2}}=k^{1-(\eps/2)/(\eps-\eps^2/(1+\eps))}=k^{(1-\eps)/2},$$ for Smallest $k$-Edge Subgraph, which, again, is a special case of Steiner $k$-Forest. This rules out an $O(k^{1/2-\eps})$-approximation for either problem unless we can do better than degree $n^{\Omega(\gamma-\alpha\beta)}=n^{\Omega(\eps^2)}$ SoS integrality gaps for Densest $k$-Subgraph and refute the aforementioned Dense-versus-Random conjecture.

\section{Statement on extent of large language model use}

All the algorithms and proofs for the tree-metric setting were developed by the author. No algorithms, theorems, lemmas, or proofs presented here were generated by LLMs.

Other than suggesting the Steiner $k$-Forest problem and assisting in finding related work, the only research idea provided by an LLM that influenced this work was the use of tree embeddings itself, which came up in the following interaction: when asked to list all known combinatorial algorithms for the (closely related, but significantly easier) $k$-MST problem, ChatGPT mentioned the possibility of using tree embeddings such as FRT~\cite{FRT04}, which gave the author the idea to do the same thing for Steiner $k$-Forest. The author guessed that there should be a height reduction result with the same height/distortion trade-off as Zelikovsky~\cite{zelikovsky}, but in a form more directly applicable to our setting, and when asked specifically about this, ChatGPT supplied the reference~\cite{CEK06} as the source for such an embedding, which is quoted in Lemma~\ref{lem:CEK}.

Beyond that point, we gave ChatGPT explicit instructions not to supply any research ideas, and further interactions were limited to identifying typos or mistakes, consulting on wording and paper organization, and most crucially, workshopping the last sentence of the abstract to death.

\bibliographystyle{abbrvnat}
\bibliography{skf}

\appendix

\section{Proof of Lemma~\ref{lem:reg}}\label{app:reg}

We restate the lemma here for convenience:

\begin{restatement}{Lemma}{lem:reg}
    
    Given a bounded-diameter $n$-vertex tree $G=(V,E)$ with positive edge weights $w$ in some range $[w_0,n\cdot w_0]$, a root $r\in V$, and a set of terminal pairs $P\subseteq S\times T$ where $S,T\subseteq V$, 
    we can efficiently find terminal subsets $S'\subseteq S$ and $T'\subseteq T$ and a set of pairs $P'\subseteq P\cap(S'\times T')$ of size $|P'|=\tilde\Omega(|P|)$, with the following properties:
    \begin{itemize}
    \item For some $c=O(1)$ that depends only on the diameter of $G$, there exist $D_S,D_T\in\nats$ such that every terminal $s\in S'$ has terminal-pair degree in the range $D_S/\log^c n\leq\deg_{P'}(s)\leq D_S$, and every terminal $t\in T'$ has degree in the range $D_T/\log^c n\leq\deg_{P'}(t)\leq D_T$.    
    \item There exist $g,h\in\nats$ such that all terminals in $S'$ are at depth $g$ and all terminals in $T'$ are at depth $h$ (note that these are constants bounded by the diameter of $G$).
    \item There exist $b_1,\ldots,b_g\in\nats$ such that, in the rooted subtree spanning $\{r\}\cup S'$, at each layer $i\in[g]$, every parent node has fan-out in the range $[b_i/\log^c n,b_i]$. Moreover, every edge in this layer has weight in the range $[\beta_i,2\beta_i]$ for some $\beta_i> 0$.
    \item There exist $d_1,\ldots,d_h\in\nats$ such that, in the rooted subtree spanning $\{r\}\cup T'$, at each layer $j\in[h]$, every parent node has fan-out in the range $[d_j/\log^c n,d_j]$. Moreover, every edge in this layer has weight in the range $[\delta_j,2\delta_j]$ for some $\delta_j> 0$.
    \item There exist $\tilde b_1,\ldots,\tilde b_g$ such that for every terminal $t\in T'$, in the rooted subtree spanning $\{r\}\cup \Gamma_{P'}(t)$, at each layer $i$ (of the $S'$-side of the tree), every parent node has fan-out in the range $[\tilde b_i/\log^c n,\tilde b_i]$.
    \item There exist $\tilde d_1,\ldots,\tilde d_h$ such that for every terminal $s\in S'$, in the rooted subtree spanning $\{r\}\cup \Gamma_{P'}(s)$, at each layer $j$ (of the $T'$-side of the tree), every parent node has fan-out in the range $[\tilde d_j/\log^c n,\tilde d_j]$.
\end{itemize}    
\end{restatement}

The construction relies heavily on the following result (slightly simplified from Lemma 2.6 of~\cite{Chlamtac-Dinitz}):

\begin{lemma}
    For any constant $c_1\in\nats$, there exists a constant $c_2\in\nats$ such that, given an $n$-vertex graph $G=(V,E)$, vertices $u,v$, and a set of paths $P$ of length (number of edges) exactly $c_1$ between $u$ and $v$, we can efficiently find a subset $P'\subseteq P$ with the following properties:
    \begin{itemize}
        \item There exist pairwise disjoint edge sets $E_1,\ldots, E_{c_1}\subset E$ such that $P'\subseteq\prod_{i=1}^{c_1}E_i$. That is, the paths in $P'$ form a layered graph.
        \item $|P'|\geq|P|/\log^{c_2}n$.
        \item There exists a positive integer vector $(m_I)_{I\subseteq[c_1]}$ such that for every non-empty index set $\emptyset\neq I\subseteq[c_1]$, for every $I$-tuple of edges $(e_i)_{i\in I}\in\prod_{i\in I}E_i$, either no path in $P'$ contains all the edges in this tuple, or the number of paths that contain them is in the range $$|\{p\in P'\mid p|_I=(e_i)_{i\in I}\}|\in[m_I,m_I\log^{c_2}n].$$
    \end{itemize}
    \label{lem:chlamtac-dinitz}
\end{lemma}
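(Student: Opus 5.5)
The plan is to first enforce the layered structure at a constant-factor loss. Then I would regularize all $2^{c_1}-1$ restriction counts simultaneously: bucket by a vector of dyadic levels, then run a deletion ("core") process, so that upper bounds are inherited automatically and lower bounds are restored by pruning.

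\emph{Step 1 (layering).} Each path $p\in P$ is a sequence $(e_1(p),\ldots,e_{c_1}(p))$ of distinct edges. Assign every edge of $G$ an independent uniformly random label in $[c_1]$. Keep $p$ iff $e_i(p)$ has label $i$ for every $i$. Each path survives with probability $c_1^{-c_1}$, so in expectation a constant fraction survives. This can be derandomized by the method of conditional expectations, or by simply repeating. Letting $E_i$ be the edges labeled $i$ gives pairwise disjoint sets with every surviving path in $\prod_i E_i$. Call the surviving set $P_1$, so $|P_1|\geq c_1^{-c_1}|P|$.

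\emph{Step 2 (simultaneous bucketing).} For $\emptyset\neq I\subseteq[c_1]$ and a path $p$, let $N_I(p)=|\{p'\in P_1: p'|_I=p|_I\}|$. Define the level $\lambda_I(p)=\lfloor\log_2 N_I(p)\rfloor$. Since $|P_1|\leq n^{2c_1}$, each level takes $O(c_1\log n)$ values. The vector $(\lambda_I(p))_I$ therefore takes at most $K:=(O(c_1\log n))^{2^{c_1}}=\log^{O(1)}n$ values. Let $Q$ be the largest class, so $|Q|\geq|P_1|/K$, and write $\lambda_I$ for its common levels. Two facts persist under any further deletion from $Q$:
\begin{itemize}
\item every $I$-tuple count is at most $2^{\lambda_I+1}$;
\item the number of distinct $I$-tuples occurring in $Q$ is at most $|P_1|/2^{\lambda_I}$, since each such tuple carries at least $2^{\lambda_I}$ paths of $P_1$.
\end{itemize}

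\emph{Step 3 (pruning).} Set $m_I:=\lfloor 2^{\lambda_I}/(2^{c_1+1}K)\rfloor$; I would handle the degenerate case $m_I=0$ by replacing $m_I$ with $1$ when $2^{\lambda_I}$ is small, with the details routine. Repeatedly delete every path $p$ for which some $I$ has fewer than $m_I$ remaining paths sharing $p|_I$.

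The main accounting step is to charge each deletion to the pair (index set $I$, tuple) that triggered it. Once a tuple falls below $m_I$, all its remaining paths, fewer than $m_I$ of them, are deleted at once, and the tuple is never charged again. Hence the total number of deletions is at most
\[
\sum_I m_I\cdot \frac{|P_1|}{2^{\lambda_I}}\leq 2^{c_1}\cdot\frac{|P_1|}{2^{c_1+1}K}=\frac{|P_1|}{2K}\leq\frac{|Q|}{2}.
\]
So the final set $P'$ has $|P'|\geq|Q|/2\geq|P|/\log^{c_2}n$.

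In $P'$, every $I$-tuple that occurs has count at least $m_I$ by the stopping condition. Its count is also at most $2^{\lambda_I+1}=O(2^{c_1}K)\,m_I\leq m_I\log^{c_2}n$ for a suitable constant $c_2$ depending only on $c_1$. This gives the third property.

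The main obstacle is that regularizing one index set perturbs the counts for the others. Sequential per-$I$ bucketing fails because the tuple-count bound from an early bucket is lost once later buckets are taken. Bucketing on the whole level vector at once, and measuring tuple multiplicities against the fixed set $P_1$, is exactly what makes the deletion charge in Step 3 valid.
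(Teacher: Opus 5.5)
Your proof is correct. It also goes further than the paper, which does not prove this lemma: it imports it from Chlamt\'a\v{c}--Dinitz~\cite{Chlamtac-Dinitz} and adds only a remark that their argument is constructive, since it repeatedly finds a constant-size edge set with a ``degree'' violation and deletes a non-empty set of paths. Your Step~3 is exactly such a procedure, but you supply everything around it. The color-coding step produces the disjoint layers $E_i$ at a $c_1^{-c_1}$ loss. The joint bucketing on the level vector $(\lambda_I)_I$, with multiplicities measured in the fixed set $P_1$, gives two facts that survive any deletion: every $I$-tuple count stays below $2^{\lambda_I+1}$, and at most $|P_1|/2^{\lambda_I}$ distinct $I$-tuples can ever trigger a deletion. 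Because of these, the charging bound $\sum_I m_I|P_1|/2^{\lambda_I}\le |Q|/2$ holds however the deletions cascade. Compared with the citation, your route gives an explicit $c_2$ (of order $2^{c_1}$, coming from $K$), an obviously polynomial and derandomizable procedure, and no dependence on unstated details of the earlier proof. The paper's route is simply shorter.

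Two small remarks. First, the degenerate case is indeed routine. Take $m_I:=\max\{1,\lfloor 2^{\lambda_I}/(2^{c_1+1}K)\rfloor\}$. An index set with $m_I=1$ never triggers a deletion, because each surviving path witnesses its own tuple. It therefore contributes nothing to the charge, and its counts are below $2^{\lambda_I+1}<2^{c_1+2}K$. In the nondegenerate case, $x<\lfloor x\rfloor+1\le 2\lfloor x\rfloor$ gives $2^{\lambda_I+1}<2^{c_1+3}K m_I$.

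Second, your closing claim that sequential per-$I$ bucketing fails is not right. Suppose you bucket one index set at a time, with each level measured in the current set. Both facts above are still inherited by every subset, the number of triggering tuples is still at most $|P_1|/2^{\lambda_I}$, and the total bucketing loss is again $(O(c_1\log n))^{2^{c_1}-1}$. Joint bucketing is a convenience, not a necessity, and this does not affect your proof.
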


\begin{remark}
    The original lemma statement in~\cite{Chlamtac-Dinitz} did not explicitly guarantee an efficient construction, since in that context only existence was needed for the analysis. However, polynomial time follows easily from their proof, since the regularizing procedure involved repeatedly finding a constant size edge set with a ``degree" violation, and deleting a non-empty set of paths from $P$.
\end{remark}

We can now prove Lemma~\ref{lem:reg}.

\begin{proof}
    Denote by $L(i)$ the set of depth $i$ vertices (in our $r$ rooted tree), and for every pair $(s,t)\in P$, let $p_{s,t}$ be the concatenation of the simple path from $s$ to $r$ in $G$ with the simple path from $r$ to $t$. Since the diameter of $G$ is bounded, there are some $g,h$ such that the set of pairs $$P_0=\{(s,t)\in P\mid s\in L(g), t\in L(h)\}$$ has cardinality $|P_0|=\Omega(|P|)$. We can also bucket all the edges into $\log n$ buckets by weight, so that, letting $E^{(i)}=\{e\in E\mid w(e)\in[w_0 2^{i-1},w_02^i)\}$, there is some vector $(\gamma_j)_{j\in[g+h]}\in [\log n]^{g+h}$, such that the set of pairs
    $$P_1=\left\{(s,t)\in P_0\;\left|\; p_{s,t}=(e_1,\ldots,e_{g+h})\text{, where } \forall j\in[g+h]:e_j\in E^{(\gamma_j)}\right.\right\}$$ has cardinality
    $$|P_1|\geq |P_0|/\log^{g+h}n.$$
    This ensures both the uniformity in depth of $S$ terminals and in depth of $T$ terminals, as well as the approximate (up to factor 2) uniformity in edge weights at every level of the paths. Finally, we double the graph, joining the two copies at $r$, identifying the $S$ terminals with one copy, and the $T$ terminals with the other, so that for every $(s,t)\in P_1$, the $r$ to $t$ portion of $p_{s,t}$ comes from a distinct copy from the $s$ to $r$ portion (other than $r$ itself), thus making the paths simple.

    The rest of the properties will follow once we apply Lemma~\ref{lem:chlamtac-dinitz}. To do so, we create a set of paths as follows: create new vertices $u,v$, connect $u$ to all terminals in $S$, and $v$ to all terminals in~$T$. For every pair $(s,t)\in P_1$, create a new path $(u,s)\circ p_{s,t}\circ(t,v)$. Let $P_2$ be the resulting path set. Note that $|P_2|=|P_1|$ and that every path in $P_2$ has unweighted length exactly $g+h+2$. Now apply Lemma~\ref{lem:chlamtac-dinitz} to $P_2$, and let $P'_2$ be the resulting set of paths, with parameters $(m_I)_{I\subseteq[g+h+2]}$ as in the lemma. Create $P'$ by deleting from these paths the initial and final edges incident on $u$ and $v$, respectively, and let $S',T'$ be the sets of all initial and final vertices in these paths, respectively.

    For convenience, for index sets $I\subseteq[g+1]$ and $J\subseteq[h+1]$ corresponding to the layers of the $S'$-side and $T'$-side of the tree\footnote{Intuitively, think of layers $g+1$ and $h+1$ as representing the terminals.}, respectively, we translate the $(m_I)$ vector to a vector indexed by the layers of our two-sided tree in the natural way:
    $$m_{I}^{J}:=m_{\{g+2-i\,|\, i\in I\}\cup\{j+g+1\,|\,j\in J\}}.$$
    It is easy to see that every degree parameter\footnote{Note that $b_1,d_1$ are simply root-degrees in the tree and do not require a separate uniformity bound.} defined by Lemma~\ref{lem:reg} is a function of the vector $\left(m_I^J\right)_{I\subseteq[g+1]}^{J\subseteq[h+1]}$, with the approximate regularity condition following from the corresponding regularity condition guaranteed by Lemma~\ref{lem:chlamtac-dinitz}. In particular, we have: 
\begin{align*}
    D_S=\tilde\Theta\left(m_{\{g+1\}}^{\emptyset}\right) \qquad&\text{and}\qquad D_T=\tilde\Theta\left(m_{\emptyset}^{\{h+1\}}\right)\\
    b_i=\tilde\Theta\left(m_{\{i-1\}}^{\emptyset}/m_{\{i-1,i\}}^{\emptyset}\right)\qquad&\text{and}\qquad d_j=\tilde\Theta\left(m^{\{j-1\}}_{\emptyset}/m^{\{j-1,j\}}_{\emptyset}\right)&\forall i\in\{2,\ldots,g\},\forall j\in\{2,\ldots, h\}\\
    \tilde b_1=\tilde\Theta\left(m_{\emptyset}^{\{h+1\}}/m_{\{1\}}^{\{h+1\}}\right) \qquad&\text{and}\qquad \tilde d_1=\tilde\Theta\left(m^{\emptyset}_{\{g+1\}}/m^{\{1\}}_{\{g+1\}}\right)\\
    \tilde b_i=\tilde\Theta\left(m_{\{i-1\}}^{\{h+1\}}/m_{\{i-1,i\}}^{\{h+1\}}\right) \qquad&\text{and}\qquad \tilde d_j=\tilde\Theta\left(m^{\{j-1\}}_{\{g+1\}}/m^{\{j-1,j\}}_{\{g+1\}}\right)&\forall i\in\{2,\ldots,g\},\forall j\in\{2,\ldots, h\}
\end{align*}
\end{proof}

\end{document}